\newenvironment{list2}{
  \begin{list}{$\bullet$}{%
      \setlength{\itemsep}{0in}
      \setlength{\parsep}{0in} \setlength{\parskip}{0in}
      \setlength{\topsep}{0in} \setlength{\partopsep}{0in}
      \setlength{\leftmargin}{0.2in}}}{\end{list}}
\newtheorem{myth}{\bf Theorem}
\newtheorem{myprop}[myth]{\it Proposition}
\newtheorem{mylemma}[myth]{\bf Lemma}
\begin{document}

\title{Joint Transceiver and Offset Design for Visible Light Communications with Input-dependent Shot Noise}


\author{Qian Gao, \IEEEmembership{Member,~IEEE},~Chen Gong, \IEEEmembership{Member,~IEEE} and Zhengyuan Xu, \IEEEmembership{Senior Member,~IEEE}
\IEEEcompsocitemizethanks{\IEEEcompsocthanksitem This work was submitted to the Transaction on Wireless Communications on Feb. 16, 2016 and is currently under review. An abridged version of this manuscript was accepted by the IEEE Globecom 2016.}}

\maketitle

\IEEEpeerreviewmaketitle

\vspace{-0.2in}
\begin{abstract}
In this paper, we investigate the problem of the joint transceiver and offset design (JTOD) for point-to-point multiple-input-multiple-output (MIMO) and multiple user multiple-input-single-output (MU-MISO) visible light communication (VLC) systems. Both uplink and downlink multi-user scenarios are considered. The shot noise induced by the incoming signals is considered, leading to a more realistic MIMO VLC channel model. Under key lighting constraints, we formulate non-convex optimization problems aiming at minimizing the sum mean squared error. To optimize the transceiver and the offset jointly, a gradient projection based procedure is resorted to. When only imperfect channel state information is available, a semidefinite programming (SDP) based scheme is proposed to obtain robust transceiver and offset. The proposed method is shown to non-trivially outperform the conventional scaled zero forcing (ZF) and singular value decomposition (SVD) based equalization methods. The robust scheme works particularly well when the signal is much stronger than the noise.

\end{abstract}

\begin{IEEEkeywords}
Visible light communication, input-dependent shot noise, transceiver design, offset design, dimming control.
\end{IEEEkeywords}

\section{Introduction}
The recent decade has witnessed the visible light communication (VLC) adopting light-emitting diodes (LEDs) as a competent complement for radio frequency communications (RFC) in both indoor and outdoor environments \cite{Elgala}. At the same time, advances in LED manufacturing have prepared the landing of various of VLC products on market \cite{Kim}. At the earlier stage, the use of blue LEDs with a yellow phosphor coating dominated due to low cost and complexity concerns. Later, red/green/blue (RGB) LEDs received more attention, especially when a higher rate is required or where designated color other than white illumination is necessary and achieved through adjusting the relative average intensities of the colored LEDs \cite{Drost10,Bai12,IEEE11}.

Besides wider spectrum, LED-based VLC has another inherent advantage over the RFC when utilized inside small cells; that is the limited illumination coverage that prevents excessive inter-cell interference. Thus it is a viable option to provide the ``last few meters'' access for the next generation wireless communication network frequently characterized by very high throughput per unit area. To achieve a very high rate, multiplexing gain offered by optical multiple-input-multiple-output (MIMO) has been explored \cite{Zeng09,Biagi2,Nuwanpriya}.

Inside a small-cell with MIMO multi-color VLC, it is typical that the channel correlations (ChC) are high, unless reduced intentionally, for example by resorting to angle diversity receivers or imaging receivers \cite{Nuwanpriya}. With RGB LEDs, color cross-talks (CoC) may exist given imperfect color filters, allowing spectrum leakage from a neighboring band(s), thus further deteriorates system detection performance. We take into account both ChC and CoC in this paper, characterized by two non-diagonal matrices, the Kronecker product of which gives the overall channel.

Transceiver design for a MIMO VLC channel is an emerging topic, where literature has schemes for point-to-point scenarios \cite{Kai,Park,Qian1,Qian2} and for multi-user scenarios  \cite{Hao,Chaaban,Pearce,Karagiannidis}.  Capacity analysis and signal processing schemes for VLC with signal-dependent noise have attracted increasing interest recently \cite{JiangzhouWang,Pergoloni,Moser,Chaaban2,Lapidoth}. A cost-dependent model is given in \cite{Chaaban2}, which models the input-dependent noise as an average intensity-dependent noise. As a step upon previous works, we consider a joint transceiver and offset design for both point-to-point and multi-user MIMO VLC, with ChC and CoC considered. Further, input-dependent shot noise is also taken into account with the joint design.
An important observation from our study is that the part of mean squared error (MSE)
caused by the shot noise is dependent on the DC offset but not signals when an M-ary pulse-amplitude modulation (PAM) with a zero mean is adopted.
In fact, the additional term in MSE caused by the shot noise depends only on the offset value but not the signal, as the latter is averaged out. Different from the existing methods, where the offset is arbitrarily chosen, it is optimized jointly with the transceiver in this paper.

Specifically, we formulate the joint transceiver and offset design (JTOD) problem under key lighting constraints, including non-negative intensity, arbitrary illumination color, dimming level and total optical power. The resulting optimization problem is highly non-convex. We show that the optimal post-equalizer $\mathbf{G}$ is of the Wiener filter form,
and the precoder $\mathbf{P}$ and the offset $\mathbf{b}$ can be solved by alternating convex optimizations.

Throughout the paper, we will use the following conventions. Boldface upper-case letters denote matrices, boldface lower-case letters denote column vectors, and standard lower-case letters denote scalars. By $(\cdot)^T$ and $(\cdot)^{-1}$ we denote the transpose and inverse operators, $\mathbb{E}(\cdot)$ the expectation operator. By $||\cdot||$, $||\cdot||_F$ and $||\cdot||_1$ we denote the Euclidean norm, the F-norm and the 1-norm. By $\otimes$ we denote the Kronecker product, $abs(\cdot)$ the element-wise absolute value operator, $\mathcal{D}(\cdot)$ the Jacobian operator, $\partial$ the partial derivative and ${\it tr}(\cdot)$ the trace operator. By $\mathbf{I}$, we denote the identity matrix and $\mathbf{e}_i$ the vector with all zeros but one at the $i$-th element.

\section{MIMO Multi-color VLC Channel}\label{sec2}

A MIMO multi-color VLC system can be very different from an RFC counterpart, especially when the lower-cost intensity modulation direct detection (IM/DD) transceivers are employed. Besides the positive and real intensity requirement, several other significant differences deserve attention from system designers.

\subsection{Signal-dependent Shot Noise}\label{sec2c}
Conventionally, it is assumed that the VLC systems suffer only from thermal noise, i.e.  $n_{th}\sim\mathcal{N}(0,\sigma^2)$. However, the practical VLC systems offer very high signal-to-noise ratio, under which scenario the noise depends on the signal itself due to the random nature of photon emission in the LED \cite{Moser}. The distribution of input-dependent shot noise is $n_{sh}\sim\mathcal{N}(0,x\varsigma^2\sigma^2)$, where $x$ is the transmitted signal and $\varsigma^2$ is the scaling factor of shot noise variance, the range of which can be chosen according to \cite{Moser}. With the additional shot noise term, there is no proof that the optimality of existing transceiver structure in literature still holds. Therefore, it is important to investigate the impact of the shot noise on system design optimality.
\subsection{Channel Correlation}\label{sec2a}
Compared with communication networks employing RF transceivers, VLC networks in many scenarios face a more severe interference problem among adjacent cells/LEDs, although if taken good care of this can be a benefit when local coverage is required. The channel correlations can be too high to yield intended multiplexing gain, especially when transmitters or receivers are placed close to each other. ChC-reduction techniques are needed, e.g. by adopting angle diversity receivers or imaging detectors. With four interfering white LED and four detectors, sample VLC channel $\bar{\mathbf{H}}$ can be found in \cite{Park}.

\subsection{Color Cross-talks}\label{sec2b}
The color diversity is unique with VLC, but imperfect color filtering may result in CoC, which can be described by the following matrix

\begin{align}
\tilde{\mathbf{H}}=
&\begin{bmatrix}
1-\xi & \xi & 0\\
\xi & 1-2\xi & \xi\\
0 & \xi & 1-\xi \\
\end{bmatrix},
\end{align}

The overall channel is the Kronecker product of the two channels, i.e.
\begin{equation}
\mathbf{H}=\tilde{\mathbf{H}}\otimes \bar{\mathbf{H}}
\end{equation}
This channel model incorporates both ChC and CoC and will be adopted in our transceiver design. Advanced modulation schemes are available taking advantage of the color diversity, and we refer the readers to \cite{Butala,Pergoloni}.
\subsection{Further Assumptions}

In this paper, we assume a full channel status information (CSI) knowledge at both the transmitter and the receiver ends. Compare to the RFC, the channel status of VLC is much more stable, mostly with no fading associated \cite{Yu}. Thus, less frequent CSI update is required, leading to non-trivially reduced complexities for the CSI-dependent signal processing algorithms, including the transceiver design problem considered in this paper. Our design is firstly carried out assuming no channel error for the CSI, and then we provide an SDP-based algorithm to deal with an imperfect CSI case.

For the systems considered in this paper, the optimizations of $\mathbf{P}$, $\mathbf{G}$ and $\mathbf{b}$ can be done either locally at both the transmitter and receiver, or only at the transmitter, who then feedforward $\mathbf{G}$ and $\mathbf{b}$ to the receiver. If the receiver has high computation ability, local optimization is preferred since error that may be caused by the feedforward process is avoided. On the network level, we have left some practical issues to further studies, such as data synchronization, a handover procedure, etc.

\section{MIMO Point-to-Point VLC}\label{sec3}

\subsection{System Model}\label{sec3a}

\begin{figure}[htbp]
\centering
\centerline{\includegraphics[width=1.0\columnwidth]{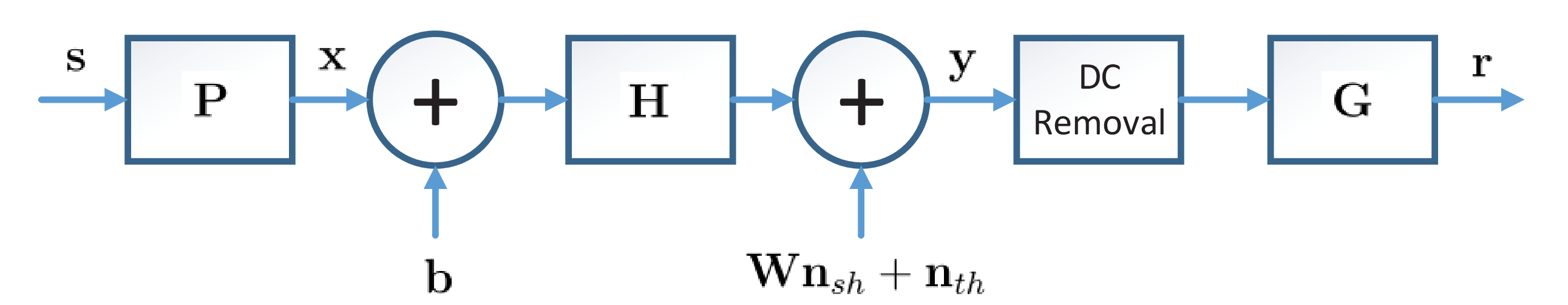}}
\caption{MIMO point-to-point VLC system block diagram.}
\end{figure}

\newcounter{FandQ}
\begin{figure*}[hb]
\hrulefill \setcounter{FandQ}{\value{equation}}
\setcounter{equation}{10}
\begin{align}
\text{MSE}&=\mathbb{E}_{\mathbf{s},\mathbf{n}_{sh},\mathbf{n}_{th}}
\big\{\parallel(\mathbf{G}\mathbf{H}\mathbf{P}-\mathbf{I})\mathbf{s}+\mathbf{G}
\mathbf{W}\mathbf{n}_{sh}+\mathbf{G}\mathbf{n}_{th}\parallel^2\big\}\notag\\
&=\mathbb{E}\big\{tr[(\mathbf{G}\mathbf{H}\mathbf{P}-\mathbf{I})\mathbf{s}+\mathbf{G}
\mathbf{W}\mathbf{n}_{sh}+\mathbf{G}\mathbf{n}_{th})^T(\mathbf{G}\mathbf{H}\mathbf{P}-\mathbf{I})\mathbf{s}+\mathbf{G}
\mathbf{W}\mathbf{n}_{sh}+\mathbf{G}\mathbf{n}_{th})] \big\}\notag\\
&=tr[(\mathbf{G}\mathbf{H}\mathbf{P}-\mathbf{I})\mathbf{R}_s(\mathbf{G}\mathbf{H}\mathbf{P}-\mathbf{I})^T]
+\sigma^2tr(\mathbf{G}\mathbf{G}^T)+\varsigma^2\sigma^2tr[(\mathbf{G}^T\mathbf{G}\mathbb{E}_{\mathbf{s}}(\mathbf{W}\mathbf{W}^T))]\notag\\
\end{align}
\begin{align}
&\mathbb{E}_{\mathbf{s}}(\mathbf{W}\mathbf{W}^T)
=\mathbb{E}_{\mathbf{s}}(\text{diag}(\mathbf{H}\mathbf{P}\mathbf{s}+\mathbf{H}\mathbf{b}))\notag\\
&=\begin{bmatrix}
\mathbb{E}_{\mathbf{s}}(\mathbf{e}_1^T\mathbf{H}\mathbf{P}\mathbf{s}+\mathbf{e}^T_1\mathbf{H}\mathbf{b}) & 0 & \ldots & 0\\
0 & \mathbb{E}_{\mathbf{s}}(\mathbf{e}^T_2\mathbf{H}\mathbf{P}\mathbf{s}+\mathbf{e}^T_2\mathbf{H}\mathbf{b})  & \ldots & 0\\
\vdots & \vdots & \ddots & \vdots\\
0  & 0 & \ldots & \mathbb{E}_{\mathbf{s}}(\mathbf{e}^T_{3N_t}\mathbf{H}\mathbf{P}\mathbf{s}+\mathbf{e}^T_{3N_t}\mathbf{H}\mathbf{b})
\end{bmatrix}=
&\begin{bmatrix}
\mathbf{e}^T_1\mathbf{H}\mathbf{b} & 0 & \ldots & 0\\
0 & \mathbf{e}^T_2\mathbf{H}\mathbf{b}  & \ldots & 0\\
\vdots & \vdots & \ddots & \vdots\\
0  & 0 & \ldots & \mathbf{e}^T_{3N_t}\mathbf{H}\mathbf{b}
\end{bmatrix}\label{12}
\end{align}
\setcounter{equation}{\value{FandQ}}
\end{figure*}

We first consider an indoor MIMO point-to-point VLC system. The transmitter is equipped with $N_t$ RGB LEDs and the receiver with $N_r$ photo-diode (PD) detectors, each with three color filters. The system diagram is shown in Fig. 1. A total of $K\leq \min\{3N_t,3N_r\}$ streams of data $\mathbf{s}(t)$ are input to a precoder $\mathbf{P}$. Each component of $\mathbf{s}(t)$ takes a value from independent M-PAM constellations,
which is symmetric on zero.
Since only non-negative intensity values are allowed by the LEDs, a DC offset $\mathbf{b}$ is added after the precoder. The time index is dropped in the following context as we are interested in a single time slot, and the intensity vector modulating the LEDs is written as
\begin{equation}
\mathbf{x}=\gamma(\mathbf{P}\mathbf{s}+\mathbf{b}),
\end{equation}
where $\gamma$ denotes the electro-to-opto conversion factor.
The amplitudes of elements in $\mathbf{s}$ are in the range $s_{i}\in [-d,d],~i\in\{1,\ldots,K\}$ and each take a value from the same M-PAM constellation. Assuming independency of each data stream, the covariance matrix $\mathbf{s}$ is diagonal with identical elements, i.e.
\begin{equation}
\mathbf{R}_s=r\mathbf{I},
\end{equation}
\begin{equation}
r=\frac{d^2}{3}\bigg(\frac{M+1}{M-1}\bigg).
\end{equation}

The received signal after the color filters is
\begin{align}
\mathbf{y}&=\eta\gamma\mathbf{H}\mathbf{x}+\mathbf{W}(\mathbf{P},\mathbf{b})\mathbf{n}_{sh}+\mathbf{n}_{th}\notag\\
&=\eta\gamma\mathbf{H}\mathbf{P}\mathbf{s}+\eta\gamma\mathbf{H}\mathbf{b}+\mathbf{W}(\mathbf{P},\mathbf{b})\mathbf{n}_{sh}+\mathbf{n}_{th},
\end{align}
where $\eta$ is the opto-to-electro conversion factor, $\mathbf{W}(\mathbf{P},\mathbf{b})\mathbf{n}_{sh}$ are the input-dependent shot noise terms. The variables $\mathbf{s}$, $\mathbf{n}_{sh}$ and $\mathbf{n}_{th}$ are independent. Each component of the shot noises $n_{sh,i}\sim \mathcal{N}(0,\varsigma^2\sigma^2)$ and the $W_{i,i}(\mathbf{P},\mathbf{b})$ is the $i$-th component of the diagonal matrix $\mathbf{W}$
\begin{equation}
W_{i,i}(\mathbf{P},\mathbf{b})=\sqrt{\eta\gamma\mathbf{e}^T_i(\mathbf{H}\mathbf{P}\mathbf{s}+\mathbf{H}\mathbf{b})}.
\end{equation}
We write $\mathbf{W}$ instead of $\mathbf{W}(\mathbf{P},\mathbf{b})$ for notational simplicity in the following context. Also, to facilitate analysis we assume $\eta=\gamma=1$, and the channel model is simplified as
\begin{align}
\mathbf{y}=\mathbf{H}\mathbf{P}\mathbf{s}+\mathbf{H}\mathbf{b}+\mathbf{W}\mathbf{n}_{sh}+\mathbf{n}_{th}.
\end{align}

Therefore, with receiver side CSI knowledge the deterministic terms $\mathbf{H}\mathbf{b}$ can be cancelled from $\mathbf{y}$ to obtain
\begin{align}
\bar{\mathbf{y}}=\mathbf{H}\mathbf{P}\mathbf{s}+\mathbf{W}\mathbf{n}_{sh}+\mathbf{n}_{th}.
\end{align}
A minimum mean square error based post-equalizer $\mathbf{G}$ is applied \cite{Tao}, such that the recovered symbol

\begin{align}
\hat{\mathbf{s}}&=\arg\min_{\mathbf{s}\in \mathcal{S}}\mathbb{E}\bigg(\parallel \mathbf{r}-\mathbf{s}\parallel^2\bigg)\notag\\
&=\arg\min_{\mathbf{s}\in\mathcal{S}}
\mathbb{E}\bigg(\parallel(\mathbf{G}\mathbf{H}\mathbf{P}-\mathbf{I})\mathbf{s}+\mathbf{G}
\mathbf{W}\mathbf{n}_{sh}+\mathbf{G}\mathbf{n}_{th}\parallel^2\bigg).
\end{align}

The $\text{MSE}$ is calculated by plugging (12) into (11) at the bottom of this page, i.e.

\setcounter{equation}{12}

\begin{align}
&\text{MSE}=
3N_tr+r{\it tr}(\mathbf{H}\mathbf{P}\mathbf{P}^T\mathbf{H}^T\mathbf{G}^T\mathbf{G})-r{\it tr}(\mathbf{G}\mathbf{H}\mathbf{P})\notag\\
&-r{\it tr}(\mathbf{P}^T\mathbf{H}^T\mathbf{G}^T)+\sigma^2tr(\mathbf{G}^T\mathbf{G})
+\varsigma^2\sigma^2{\it tr}\{\text{diag}(\mathbf{H}\mathbf{b})\mathbf{G}^T\mathbf{G}\}.\label{13}
\end{align}
It is seen that the last term $\varsigma^2\sigma^2{\it tr}\{\text{diag}(\mathbf{H}\mathbf{b})\mathbf{G}^T\mathbf{G}\}$ caused by the shot noise is offset dependent only, while the signal dependent part is averaged out.

\subsection{Key Lighting Constraints}\label{sec3b}

\subsubsection{Non-negative Intensity}
The intensity vector $\mathbf{x}$ modulating the LEDs has to take non-negative values, i.e.
\footnote{A vector $\mathbf{x}\leq(\geq) 0$ means that it is element-wise non-positive (non-negative).}
\begin{equation}
\mathbf{x}=\mathbf{P}\mathbf{s}+\mathbf{b}\geq 0.
\end{equation}
However, this constraint is signal-dependent, thus we resort to the following sufficient condition
\begin{equation}
abs(\mathbf{P})\boldsymbol{\delta}-\mathbf{b}\leq 0,
\end{equation}
where $abs(\mathbf{P})$ means element-wise absolute value and $\boldsymbol{\delta}=\delta\mathbf{1}$.

\subsubsection{Dimming Control and Total Optical Power Constraint}

The dimming control and total optical power constraint can be written as a single equation
\begin{equation}
\mathbf{1}^T\mathbf{b}=\beta P_T,\label{14}
\end{equation}
where $\beta\in(0,1]$ is the dimming level, $P_T$ is the maximally allowed total optical power. They both rely on the offset only. The \eqref{14} has assumed unitary electro-to-opto conversion factor, and it is also the average power constraint since the oscillating part is averaged out, i.e.
\begin{align}
\mathbb{E}(\mathbf{P}\mathbf{s}+\mathbf{b})=\mathbf{P}\mathbb{E}(\mathbf{s})+\mathbf{b}
=\mathbf{b}.
\end{align}

\newcounter{FandQ4}
\begin{figure*}[ht]
\setcounter{FandQ4}{\value{equation}}
\setcounter{equation}{23}
\begin{equation}
\begin{aligned}
& \underset{\mathbf{P},\mathbf{b}}{\text{min}}
& & {{\it tr}\big(\mathbf{H}\mathbf{P}\mathbf{P}^T\mathbf{H}^T\mathbf{G}^{(0),T}\mathbf{G}^{(0)}-2\mathbf{G}^{(0),T}\mathbf{H}\mathbf{P}
+\frac{\sigma^2}{r}\mathbf{G}^{(0),T}\mathbf{G}^{(0)}+\frac{\varsigma^2\sigma^2}{r}\text{diag}(\mathbf{H}\mathbf{b})\mathbf{G}^{(0),T}\mathbf{G}^{(0)}}\big) \\
& \text{s.t.}
&& abs(\mathbf{P})\boldsymbol{\delta}-\mathbf{b}\leq 0\\
&&& \boldsymbol{\Pi}\mathbf{b}=\beta P_T\bar{\mathbf{b}}
\end{aligned}
\end{equation}
\hrulefill \setcounter{equation}{\value{FandQ4}}
\end{figure*}

\newcounter{FandQ6}
\begin{figure*}[hb]
\hrulefill \setcounter{FandQ6}{\value{equation}}
\setcounter{equation}{28}
\begin{align}
\mathcal{D}_{\text{vec}(\mathbf{P})}
\text{MSE}=vec^T\bigg(\mathbf{H}^T\big(\varsigma^2\sigma^2
\text{diag}(\mathbf{H}\mathbf{b})+\sigma^2\mathbf{I}\big)^{-1}\mathbf{H}\mathbf{P}\mathbf{Y}^{-2}\bigg)+vec^T\bigg(\mathbf{Y}^{-2}\mathbf{P}^T\mathbf{H}^T\big(\varsigma^2\sigma^2
\text{diag}(\mathbf{H}\mathbf{b})+\sigma^2\mathbf{I}\big)^{-1}\mathbf{H}\bigg)\mathbf{\Pi},\label{25}
\end{align}
where $\boldsymbol{\Pi}$ is a permutation matrix such that $\text{d}\text{vec}(\mathbf{P}^T)=\boldsymbol{\Pi}\text{d}\text{vec}(\mathbf{P}).$ And
\begin{equation}
\mathbf{Y}=r^{-1}\mathbf{I}+\mathbf{P}^T\mathbf{H}^T
\big(\varsigma^2\sigma^2\text{diag}(\mathbf{H}\mathbf{b})+\sigma^2\mathbf{I}\big)^{-1}\mathbf{H}\mathbf{P}.
\end{equation}
\begin{equation}
\mathcal{D}_{\mathbf{b}}\text{MSE}=vec^T\big[(\varsigma^2\sigma^2\text{diag}(\mathbf{H}\mathbf{b})+\sigma^2\mathbf{I})^{-1}
\mathbf{H}\mathbf{P}\mathbf{Y}^{-2}\mathbf{P}^T\mathbf{H}^T(\varsigma^2\sigma^2\text{diag}(\mathbf{H}\mathbf{b})+
\sigma^2\mathbf{I})^{-1}\big]bvec(\mathbf{\Phi}_b),\label{26}
\end{equation}
\setcounter{equation}{\value{FandQ6}}
\end{figure*}


\subsubsection{Illumination Color Constraint}
System designers can determine the illumination color by adjusting the so termed color ratio vector $\bar{\mathbf{b}}$ on the CIE color space
\begin{equation}
\bar{\mathbf{b}}=[b_r~ b_g~ b_b]^T\geq \textbf{0},
\end{equation}
\begin{equation}
b_r+b_g+b_b=1,
\end{equation}
where the subscripts $b_r$, $b_g$, and $b_b$ stand for the relative intensity of the red, green and blue LEDs respectively, i.e. $\bar{\mathbf{b}}$ is tristimulus. The illumination color constraint can thus be written as
\begin{align}
\boldsymbol{\Pi}\mathbf{b}=\beta P_T\bar{\mathbf{b}},\label{16}
\end{align}
where matrix $\boldsymbol{\Pi}$ is of size $3\times 3N_t$ and contains zeros and ones only, which serve to sum up individual color intensities. For a two RGB LEDs case, the matrix

\begin{equation}
\boldsymbol{\Pi}=
\begin{bmatrix}
1 & 1 & 0 & 0 & 0 & 0\\
0 & 0 & 1 & 1 & 0 & 0\\
0 & 0 & 0 & 0 & 1 & 1
\end{bmatrix}.
\end{equation}
It is straightforward to show that \eqref{14} is absorbed into \eqref{16}. Thus, one equality includes both the optical power and illumination color requirements.

\subsection{The MSE Minimization Problem}

The objective of this design is to minimize the MSE in \eqref{13} by optimizing the post-equalizer, the precoder and the offset, subject to the lighting constraints. The optimization problem is formulated as

\setcounter{equation}{21}

\begin{equation}
\begin{aligned}
& \underset{\mathbf{G},\mathbf{P},\mathbf{b}}{\text{min}}
& & {{\text{MSE}}}\\
& \text{s.t.}
&& abs(\mathbf{P})\boldsymbol{\delta}-\mathbf{b}\leq 0\\
&&& \boldsymbol{\Pi}\mathbf{b}=\beta P_T\bar{\mathbf{b}}.\label{21}
\end{aligned}
\end{equation}
This problem is jointly non-convex in $(\mathbf{G},\mathbf{P},\mathbf{b})$.

Other reasonable objectives besides the sum MSE include minimization of the arithmetic mean of the MSEs, the geometric mean of the MSEs, the determinant of the MSEs and the maximum of the MSEs. We refer the readers interested to a further reading of the paper \cite{Palomar}. The joint design problem may be studied with other objectives such as maximizing the available system achievable rates in literature \cite{Lapidoth,Moser,Chaaban2}.

\subsection{The Proposed Optimization Method}

As the post-equalizer $\mathbf{G}$ only appears in the objective function of \eqref{21}, the optimum is obtainable by setting the associated gradient of MSE to zero, i.e.
\begin{equation}
\mathcal{D}_{\mathbf{G}}\text{MSE}=\mathbf{0}.\label{22}
\end{equation}
The solution is
\begin{align}
\mathbf{G}=r\mathbf{P}^{T}\mathbf{H}^T\big(r\mathbf{H}\mathbf{P}\mathbf{P}^{T}\mathbf{H}^T
+\sigma^2\varsigma^2\text{diag}(\mathbf{H}\mathbf{b})+\sigma^2\mathbf{I}\big)^{-1},\label{23}
\end{align}
which is of the Wiener filter form. Plug \eqref{23} into \eqref{13} the following expression of MSE is obtained

\setcounter{equation}{24}

\begin{equation}
\text{MSE}=tr\bigg(r^{-1}\mathbf{I}+\mathbf{P}^T\mathbf{H}^T
\big(\varsigma^2\sigma^2\text{diag}(\mathbf{H}\mathbf{b})+\sigma^2\mathbf{I}\big)^{-1}
\mathbf{H}\mathbf{P}\bigg)^{-1}\label{25}
\end{equation}
The resulting optimization problem is
\begin{equation}
\begin{aligned}
& \underset{\mathbf{P},\mathbf{b}}{\text{min}}
& & {\text{MSE}} \\
& \text{s.t.}
&& abs(\mathbf{P})\boldsymbol{\delta}-\mathbf{b}\leq 0\\
&&& \boldsymbol{\Pi}\mathbf{b}=\beta P_T\bar{\mathbf{b}}.\label{22}
\end{aligned}
\end{equation}
\begin{myprop}
The objective function in \eqref{22} is not convex in $\mathbf{P}$ or $\mathbf{b}$.
\end{myprop}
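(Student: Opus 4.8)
The plan is to use the elementary fact that a function is convex if and only if its restriction to every line in its domain is convex; hence, to establish non-convexity it suffices to exhibit, for each of $\mathbf{P}$ and $\mathbf{b}$ separately, a single line along which the second derivative of $\text{MSE}$ is negative. Intuitively, non-convexity is to be expected because the closed form \eqref{25} has the composite structure $\phi\circ\Psi$, where the outer map $\phi(\mathbf{M})={\it tr}(\mathbf{M}^{-1})$ is convex but \emph{decreasing} on positive-definite matrices while the inner dependence enters through a matrix inverse that is matrix-convex; a decreasing outer function composed with a convex inner map carries no convexity guarantee and typically reverses curvature. I would make this precise through explicit one-dimensional restrictions.

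For $\mathbf{P}$: fix $\mathbf{b}$ and write $\mathbf{A}=(\varsigma^2\sigma^2\text{diag}(\mathbf{H}\mathbf{b})+\sigma^2\mathbf{I})^{-1}\succ\mathbf{0}$. I would restrict to the ray $\mathbf{P}=t\mathbf{P}_0$ for any fixed $\mathbf{P}_0$ with $\mathbf{H}\mathbf{P}_0\neq\mathbf{0}$, so that $\text{MSE}(t)={\it tr}[(r^{-1}\mathbf{I}+t^2\mathbf{B})^{-1}]$ with $\mathbf{B}=\mathbf{P}_0^T\mathbf{H}^T\mathbf{A}\mathbf{H}\mathbf{P}_0\succeq\mathbf{0}$ and $\mathbf{B}\neq\mathbf{0}$. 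Diagonalising $\mathbf{B}$ with eigenvalues $\lambda_i\ge 0$ gives $\text{MSE}(t)=\sum_i(r^{-1}+t^2\lambda_i)^{-1}$, and a term-by-term differentiation yields $\left.\frac{d^2}{dt^2}\text{MSE}\right|_{t=0}=-2r^2\,{\it tr}(\mathbf{B})<0$. Thus $\text{MSE}$ is strictly concave at the origin along this ray, so it cannot be convex in $\mathbf{P}$.

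For $\mathbf{b}$: fix $\mathbf{P}$, set $\mathbf{C}=\mathbf{H}\mathbf{P}$, and note that $\mathbf{N}(\mathbf{b})=\varsigma^2\sigma^2\text{diag}(\mathbf{H}\mathbf{b})+\sigma^2\mathbf{I}$ is affine in $\mathbf{b}$ and diagonal. I would pass to the smallest nontrivial instance, namely a single data stream with $\mathbf{C}$ concentrated on one receive coordinate $j$ (attainable whenever $\mathbf{H}$ is invertible, which holds generically for $\mathbf{H}=\tilde{\mathbf{H}}\otimes\bar{\mathbf{H}}$), and move $\mathbf{b}$ along a direction $\mathbf{v}$ with $\mathbf{H}\mathbf{v}\propto\mathbf{e}_j$ so that only the entry $N_{jj}$ varies. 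Then \eqref{25} collapses to the scalar linear-fractional function $g(N_{jj})=N_{jj}/(r^{-1}N_{jj}+c^2)$, whose second derivative $g''=-2c^2r^{-1}(r^{-1}N_{jj}+c^2)^{-3}$ is strictly negative; since $N_{jj}$ is affine in the line parameter, $\text{MSE}$ is strictly concave along this line and hence not convex in $\mathbf{b}$.

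I expect the $\mathbf{b}$ case to be the main obstacle, because $\mathbf{b}$ enters through a matrix inverse nested inside a second matrix inverse, so the full Hessian is cumbersome and its indefiniteness is not transparent. The decisive simplification is the dimensional reduction that turns the restricted objective into a linear-fractional function of a single diagonal entry, where concavity is immediate; the one point needing care is to verify that this restriction is genuinely realisable within the model and that the chosen line segments lie in the region where the inner matrices remain positive definite (and, if desired, inside the feasible set, which is itself convex since $abs(\mathbf{P})\boldsymbol{\delta}-\mathbf{b}\leq\mathbf{0}$ and $\boldsymbol{\Pi}\mathbf{b}=\beta P_T\bar{\mathbf{b}}$ describe a convex set). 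Exhibiting negative curvature at a single interior feasible point then rules out convexity of the objective in either variable.
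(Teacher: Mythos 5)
Your proof is correct, and it rests on the same principle as the paper's own argument --- restrict the objective to a line and exhibit negative second-order curvature --- but the execution differs in ways worth noting. For the $\mathbf{P}$ part, the paper works with a general segment $\alpha\mathbf{P}_1+(1-\alpha)\mathbf{P}_2$, derives the matrix-calculus expression for the second derivative of $\mathrm{tr}(\mathbf{A}^{-1})$ along the line, observes that its sign is indefinite, and then certifies non-convexity with a single numerical instance ($\mathbf{H}=\mathbf{I}$, $\mathbf{P}_1=\mathbf{0}$, $\mathbf{P}_2=\mathbf{I}$, $\mathbf{b}=\mathbf{1}$, $\alpha=r=\varsigma^2=\sigma^2=1$); unwound, that instance is exactly your ray through the origin evaluated at $\mathbf{P}=\mathbf{0}$. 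Your closed form $\mathrm{MSE}''(0)=-2r^2\,\mathrm{tr}(\mathbf{B})<0$ subsumes the paper's example and holds for every $\mathbf{H}$, every admissible $\mathbf{b}$, and every direction with $\mathbf{H}\mathbf{P}_0\neq\mathbf{0}$, so no numerical verification is needed. For the $\mathbf{b}$ part the difference is more substantial: the paper gives no argument beyond asserting that ``a similar procedure'' applies, and a literal repetition of its $\mathbf{P}$-procedure is messier than that phrase suggests, because $\mathbf{b}$ enters through $\mathrm{diag}(\mathbf{H}\mathbf{b})$ inside a nested inverse, so the line-restricted matrix is no longer quadratic in the line parameter. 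Your reduction --- choose $\mathbf{H}\mathbf{p}\propto\mathbf{e}_j$ and a direction $\mathbf{v}$ with $\mathbf{H}\mathbf{v}\propto\mathbf{e}_j$, so that the objective collapses to the manifestly concave linear-fractional function $N_{jj}/(r^{-1}N_{jj}+c^2)$ of an affine scalar --- is a genuinely different device that fills this gap explicitly. The only caveats, both of which you already flag, are that this construction needs $\mathbf{H}$ invertible (generic, and automatic for the Kronecker channel when both factors are invertible) and that the segment must stay where $\varsigma^2\sigma^2\,\mathrm{diag}(\mathbf{H}\mathbf{b})+\sigma^2\mathbf{I}\succ 0$; neither affects the conclusion, and on feasibility your treatment is no less careful than the paper's own example.
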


\begin{proof}
See Appendix A.
\end{proof}

Therefore, we propose to solve $\mathbf{P}$ and $\mathbf{b}$ one after another.
The following relationship between the differential and the Jacobian on $\mathbf{P}$ is useful
\begin{equation}
\text{d}\text{MSE}=\mathcal{D}_{\text{vec}(\mathbf{P})}
\text{MSE}\cdot\text{d}\text{vec}(\mathbf{P}).
\end{equation}
The Jacobian of the objective $\text{MSE}$  w.r.t $\text{vec}(\mathbf{P})$ is derived as in (29) at the bottom of this page, where $\text{vec}(\cdot)$ is the vectorization operator.
The Jacobian of the objective $\text{MSE}$  w.r.t $\mathbf{b}$ is derived as in (31).

\setcounter{equation}{27}

The $3N_t\times (3N_t)^2$ matrix $\mathbf{\Phi}_b$ is of the following form
\begin{equation}
\mathbf{\Phi}_b
=
\begin{bmatrix}
\mathbf{e}_1^T\mathbf{H} & \mathbf{0}^T & \ldots & \mathbf{0}^T\\
\mathbf{0}^T & \mathbf{e}_2^T\mathbf{H} & \ldots & \mathbf{0}^T\\
\vdots & \vdots & \ddots & \vdots\\
\mathbf{0}^T  & \mathbf{0}^T & \ldots & \mathbf{e}_{3N_t}^T\mathbf{H}
\end{bmatrix},
\end{equation}
where we define $\text{bvec}(\cdot)$ as a non-standard block vectorization operator to rearrange the wide matrix $\mathbf{\Phi}_b$  into a tall one with dimension $(3N_t)^2\times 3N_t$.
To obtain an optimized pair of precoder and offset, a gradient projection procedure as in Algorithm 1 is applied \cite{Bertsekas}. The feasible region $\Omega_{\mathbf{P}}$ of variable $\mathbf{P}$ determined by the first constraint of problem \eqref{21} is convex and the feasible region $\Omega_{\mathbf{b}}$ of variable $\mathbf{b}$ determined by both constraints is also convex. Therefore, to project a vector to a convex set is to find the vector in the set that has the minimum distance to it.

\begin{algorithm}[ht]
        \caption{Iterative Gradient Projection Algorithm}
        \label{GPAlgorithm}           
        \begin{algorithmic}[1]
            \renewcommand{\algorithmicrequire}{\textbf{Input:}}
            \renewcommand{\algorithmicensure}{\textbf{Output:}}
            \Require Initialize $\text{vec}(\mathbf{P}^{(0)})$ and $\mathbf{b}^{(0)}$, $k=0$;
            \Ensure  Converged $\text{vec}\big(\mathbf{P}^{*}\big)=\text{vec}(\mathbf{P}^{(k)})$ and $\mathbf{b}^{*}=\mathbf{b}^{(k)}$;
            \Repeat
            \State
             Compute the Jacobian matrix $\mathcal{D}_{\text{vec}(\mathbf{P})}\text{MSE}$ by \eqref{25};
            \State
            $\text{vec}^T(\widetilde{\mathbf{P}}^{(k)}) = \text{vec}^T(\mathbf{P}^{(k)})+\alpha^{(k)}\mathcal{D}_{\text{vec}(\mathbf{P})}f(\mathbf{P}^{(k)})$;
            \State
            $\text{vec}^T(\bar{\mathbf{P}}^{(k)})$ = projection of $\text{vec}^T(\widetilde{\mathbf{P}}^{(k)})$ onto the feasible region $\Omega_{\mathbf{P}}$;
            \State
            $\text{vec}^T(\mathbf{P}^{(k+1)})=\text{vec}^T(\bar{\mathbf{P}}^{(k)}) + \gamma^{(k)}(\text{vec}^T(\bar{\mathbf{P}}^{(k)}) -\text{vec}^T(\mathbf{P}^{(k)}) )$;
            \State
            Set $\mathbf{P}^k=\mathbf{P}^{k+1}$ and compute the Jacobian matrix $\mathcal{D}_{\mathbf{b}}\text{MSE}$ by \eqref{26};
            \State
            Update the vector $\mathbf{b}$ by a similar procedure with STEP 3$-$5;
            \State
            $k=k+1$;
            \Until $||\text{vec}^T(\mathbf{P}^{(k)})-\text{vec}^T(\mathbf{P}^{(k-1)})|| < \epsilon_P$ and $||\mathbf{b}^{(k)}-\mathbf{b}^{(k-1)}|| < \epsilon_b$;
        \end{algorithmic}
    \end{algorithm}
The stopping criterion parameters $\epsilon_P=\epsilon_b=10^{-4}$ are chosen, $\Omega_{\mathbf{P}}$ and $\Omega_{\mathbf{b}}$ are the convex feasible regions, and the step sizes $\alpha^{(k)}$ and $\gamma^{(k)}$ are calculated based on the Armijo rule. The time complexity of gradient operation is $\mathcal{O}(K^4)$, observed from (36)-(38), and the time complexity of projection operation is $\mathcal{O}\big((K^2)^{3.5}\big)$.

\subsection{The Minimal Dimming Level Problem}
There are certain cases that lower dimming level is preferred, provided that the communication performance is guaranteed. In these cases, the dimming factor $\beta$ is also an optimization variable besides the transceiver and offset. The associated optimization problem is formulated as follows
\setcounter{equation}{31}
\begin{equation}
\begin{aligned}
& \underset{\mathbf{G},\mathbf{P},\mathbf{b},\beta}{\text{min}}
& & {\beta} \\
& \text{s.t.}
&& \text{MSE}\leq \varepsilon\\
&&& abs(\mathbf{P})\boldsymbol{\delta}-\mathbf{b}\leq 0\\
&&& \boldsymbol{\Pi}\mathbf{b}=\beta P_T\bar{\mathbf{b}},\label{33}
\end{aligned}
\end{equation}
where the worst MSE performance is constrained with a small number $\varepsilon$. Problem \eqref{33} is not jointly convex in $(\mathbf{G},\mathbf{P},\mathbf{b},\beta)$, but the problem is convex in $(\mathbf{P},\mathbf{b},\beta)$ if $\mathbf{G}$ is assumed to be a Wiener filter.

\section{Multi-user MISO VLC}
\subsection{Downlink Transceiver Design with Perfect CSI}

A closely related problem to the point-to-point MIMO system is the multi-user multiple input single output (MU-MISO) broadcast system as shown in Figure 2. The receiver-side single antenna deployment is particularly suitable for VLC, considering the limited size of user handsets, e.g. mobile phones. Moreover, it is also reasonable to adopt the multiple (even massive number of) transmitters configurations on the other end, since LED arrays are often naturally used for illumination uniformity.
\begin{figure}[htbp]
\centerline{\includegraphics[width=1.0\columnwidth]{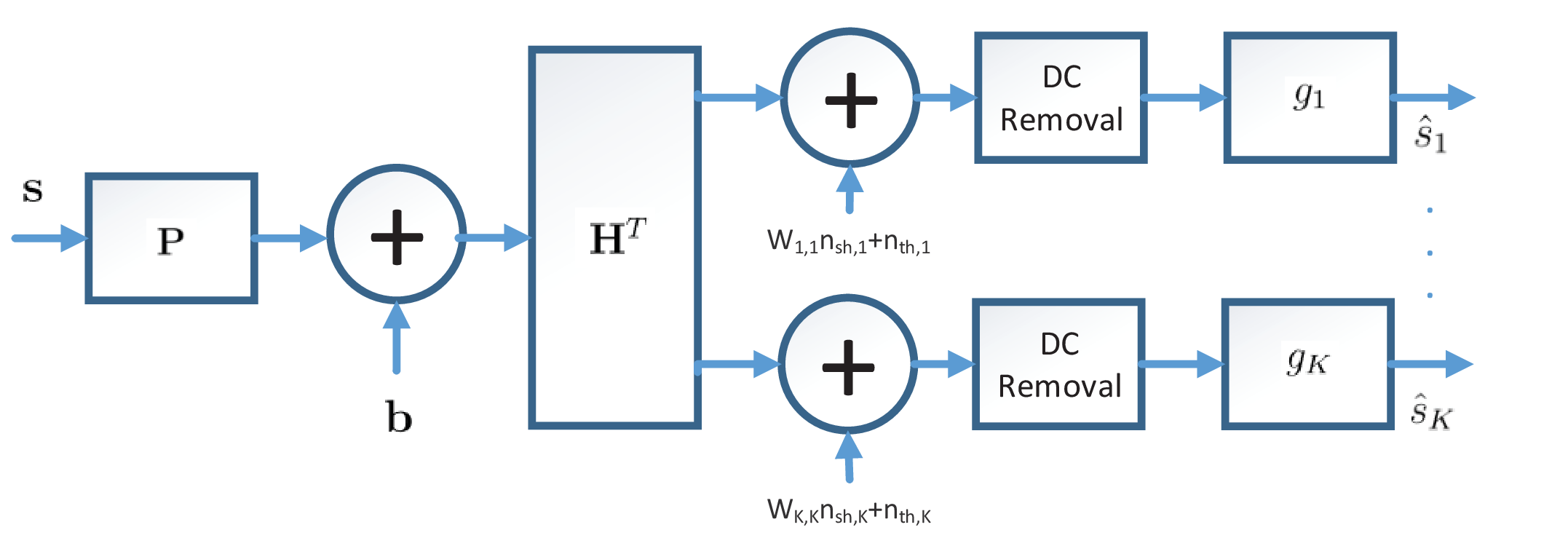}}
\vspace{-0.1in}
\caption{MU-MISO Downlink VLC system block diagram.}
\end{figure}
Assume that $K$ users are in presence in a room environment.
The data for $k$-th user $s_k$ multiplies its beamforming vector $\mathbf{p}_k$ of size $N_t\times 1$, and the summed signal for all users is then added with an offset vector $\mathbf{b}$, thus the resultant signal is
\begin{equation}
\mathbf{x}=\sum_{k=1}^K\mathbf{p}_ks_k+\mathbf{b},
\end{equation}

The received intensity at user $k$ is
\begin{align}
&y_k=\mathbf{h}_k^T\mathbf{x}+(\mathbf{h}_k^T\mathbf{x})^{1/2}n_{sh,k}+n_{th,k}\notag\\
&=\mathbf{h}_k^T\sum_{j=1}^K\mathbf{p}_js_j+\mathbf{h}_k^T\mathbf{b}
+\big(\mathbf{h}_k^T\sum_{j=1}^K\mathbf{p}_js_j+\mathbf{h}_k^T\mathbf{b}\big)^{\frac{1}{2}}n_{sh,k}+n_{th,k},\label{41}
\end{align}
where $\mathbf{h}_k$ is the $k$-th column of $\mathbf{H}$. Then the second term in \eqref{41} is cancelled, resulting in
\begin{align}
&\bar{y}_k=\notag\\
&\mathbf{h}_k^T\mathbf{p}_ks_k+
\underbrace{\mathbf{h}_k^T\sum_{j\neq k}\mathbf{p}_js_j+\big(\mathbf{h}_k^T\sum_{j=1}^K\mathbf{p}_js_j
+\mathbf{h}_k^T\mathbf{b}\big)^{\frac{1}{2}}n_{sh,k}+n_{th,k}}_\text{interference plus noise}
\end{align}
Then a scalar post-equalizer $g_k$ multiplies each $\bar{y}_k$ to recover the $k$-th symbol
\begin{equation}
\hat{s}_k=g_k\bar{y}_k.
\end{equation}
The MSE of the $k$-th user is
\begin{align}
&\text{MSE}^{DL}_k=\mathbb{E}_{\mathbf{s},n_{sh,k},n_{th,k}}||\hat{s}_k-s_k||_2^2\notag\\
&=r\sum_{j=1}^Kg_k^2\mathbf{h}_k^T\mathbf{p}_j\mathbf{p}_j^T\mathbf{h}_k-2rg_k\mathbf{h}_k^T\mathbf{p}_k+\sigma^2g_k^2+r+\underbrace{\varsigma^2\sigma^2g^2_k\mathbf{h}_k^T
\mathbf{b}}_{\text{shot noise induced}}\notag\\\label{37}
\end{align}

By taking the gradient of \eqref{37}, the optimum $g_k$ is obtained as
\begin{equation}
g_k=r\mathbf{h}_k^T\mathbf{p}_k(r\sum_{j=1}^K\mathbf{h}_k^T\mathbf{p}_j\mathbf{p}_j^T\mathbf{h}_k+
\varsigma^2\sigma^2\mathbf{h}_k^T\mathbf{b}+\sigma^2)^{-1}.\label{38}
\end{equation}
By plugging \eqref{38} into \eqref{37} the following expression is obtained,
\begin{equation}
\text{MSE}_k^{DL}=r-r\mathbf{h}_k^T\mathbf{p}_k(\mathbf{h}_k^T\mathbf{P}\mathbf{P}^T\mathbf{h}_k+
\frac{\varsigma^2\sigma^2\mathbf{h}_k^T\mathbf{b}}{r}+\frac{\sigma^2}{r})^{-1}\mathbf{p}_k^T\mathbf{h}_k\label{39}
\end{equation}

The optimization problem minimizing the sum MSE is formulated as follows
\begin{equation}
\begin{aligned}
& \underset{\mathbf{P},\mathbf{b}}{\text{max}}
& & {\sum_{k=1}^Kr\mathbf{h}_k^T\mathbf{P}\mathbf{e}_k(\mathbf{h}_k^T\mathbf{P}\mathbf{P}^T\mathbf{h}_k+
\frac{\varsigma^2\sigma^2\mathbf{h}_k^T\mathbf{b}}{r}+\frac{\sigma^2}{r})^{-1}\mathbf{e}_k^T\mathbf{P}^T\mathbf{h}_k} \\
& \text{s.t.}
&& abs(\mathbf{P})\boldsymbol{\delta}-\mathbf{b}\leq 0\\
&&& \mathbf{1}^T\mathbf{b}=\beta P_T,
\end{aligned}
\end{equation}
where $\mathbf{p}_k=\mathbf{P}\mathbf{e}_k$. The problem can be solved by applying a similar gradient projection procedure as given in Algorithm 1.

\subsection{Downlink Transceiver Design with Imperfect CSI}
It is more practical to assume imperfect CSI at the base station. We model the imperfect CSI with a deterministic norm-bounded error model \cite{Jose}, i.e.
\begin{equation}
\mathbf{h}_k\in\{\hat{\mathbf{h}}_k+\boldsymbol{\delta}_k:||\boldsymbol{\delta}_k||\leq \rho_k\}\triangleq\mathcal{U}_k(\boldsymbol{\delta}_k),~\forall k\label{50}
\end{equation}
where $\hat{\mathbf{h}}_k$ and $\boldsymbol{\delta}_k$ are the $k$-th column of the estimated channel matrix and the corresponding estimation error vector, $\rho_k=s||\mathbf{h}_k||$ is the uncertain size of the $k$-th column of channel and scalar $s\in [0,1)$ is termed the uncertain scalar. The optimal post-equalizer is a Wiener filter as in \eqref{38} with $\mathbf{h}_k$ replaced with $\hat{\mathbf{h}}_k$. The joint design of transceiver and offset considering imperfect CSI for both the transmitter and receiver is included in Proposition 2, which is termed robust JTOD.
\begin{myprop}[]

The optimized precoder $\mathbf{P}^*$ and offset $\mathbf{b}^*$ can be found jointly through solving the following semidefinite programming (SDP), where channel uncertainty is assumed, and the maximum possible sum MSE is minimized:\\
\begin{equation}
\begin{aligned}
& \underset{\substack{\mathbf{P},\mathbf{b},\boldsymbol{\lambda},\boldsymbol{\tilde{\lambda}}\\ \boldsymbol{\nu},\boldsymbol{\tilde{\nu}},\varpi}}{\text{min}}
& & {\varpi} \\
& \text{s.t.}
&& \sum_{k=1}^K(\lambda_k+\tilde{\lambda}_k)+\lambda_0\leq \varpi ,\\
&&& ||\sigma\mathbf{g}||^2\leq\lambda_0\\
&&&
\begin{bmatrix}
    \lambda_k-\nu_k  & g_k\hat{\mathbf{h}}_k^T\mathbf{P}-\mathbf{e}_k^T &  \mathbf{0}^T \\
   (g_k\hat{\mathbf{h}}_k^T\mathbf{P}-\mathbf{e}_k^T)^T &   r^{-1}\mathbf{I} &  -\rho_kg_k\mathbf{P}^T \\
   \mathbf{0}  & -\rho_kg_k\mathbf{P} &  \nu_k\mathbf{I}
\end{bmatrix}\\
&&&\qquad \qquad \qquad \qquad \qquad \qquad \qquad \succeq 0,\forall k\\
&&&
\begin{bmatrix}
    \tilde{\lambda}_k-\varsigma^2\sigma^2g_k^2\hat{\mathbf{h}}_k^T\mathbf{b}-\tilde{\nu}_k(\frac{1}{2}\varsigma^2\sigma^2g_k^2)^2  & -\rho_k\mathbf{b}^T\\
   -\rho_k\mathbf{b} &  \tilde{\nu}_k\mathbf{I}
\end{bmatrix}\\
&&&\qquad \qquad \qquad \qquad \qquad \qquad \qquad\succeq 0,\forall k\\
&&& \nu_k\geq 0, \tilde{\nu}_k\geq 0,~\forall k\\
&&& abs(\mathbf{P})\boldsymbol{\delta}-\mathbf{b}\leq 0\\
&&& \mathbf{1}^T\mathbf{b}= \beta P_T,
\end{aligned}\label{51}
\end{equation}
where the square matrix $\mathbf{A}\succeq 0$ means $\mathbf{A}$ is positive semi-definite.
\end{myprop}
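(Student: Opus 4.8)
The plan is to show that, once the per-user equalizers $g_k$ are frozen to the Wiener filters of \eqref{38} evaluated at the estimated channels $\hat{\mathbf{h}}_k$ (so that $\mathbf{g}=[g_1,\dots,g_K]^T$ is a parameter and the decision variables are exactly those listed in \eqref{51}), the robust problem is a worst-case minimization of the sum MSE over the norm-bounded sets $\mathcal{U}_k(\boldsymbol{\delta}_k)$, and that this min--max admits a tractable reformulation via the S-procedure followed by Schur complements. First I would rewrite the single-user MSE of \eqref{37} by completing the square in $\mathbf{h}_k$, giving
\[
\text{MSE}_k^{DL}=r\,\big\|g_k\mathbf{h}_k^T\mathbf{P}-\mathbf{e}_k^T\big\|^2+\sigma^2g_k^2+\varsigma^2\sigma^2g_k^2\,\mathbf{h}_k^T\mathbf{b},
\]
which isolates a data-plus-interference piece that is quadratic in $\mathbf{h}_k$, a noise piece independent of $\mathbf{h}_k$, and the shot-noise piece that is affine in $\mathbf{h}_k$. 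Substituting $\mathbf{h}_k=\hat{\mathbf{h}}_k+\boldsymbol{\delta}_k$ exposes the dependence on the error.

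Because the uncertainty sets are independent across users, the worst-case of the sum decouples into $\sum_k\max_{\|\boldsymbol{\delta}_k\|\le\rho_k}\text{MSE}_k^{DL}$. Since the two channel-dependent pieces of a given user share the same $\boldsymbol{\delta}_k$, I would (conservatively, as is standard in robust transceiver design) bound them separately: introduce epigraph variables $\lambda_k$ for the quadratic term, $\tilde\lambda_k$ for the shot-noise term, and $\lambda_0\ge\|\sigma\mathbf{g}\|^2$ for the $\boldsymbol{\delta}_k$-free noise term. Then $\sum_k(\lambda_k+\tilde\lambda_k)+\lambda_0$ upper-bounds the worst-case sum MSE, and minimizing an auxiliary $\varpi$ with $\sum_k(\lambda_k+\tilde\lambda_k)+\lambda_0\le\varpi$ realizes the min of the worst case, reproducing the scalar constraints of \eqref{51}.

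The core of the argument is converting the two semi-infinite constraints into LMIs. For the data-plus-interference term, the requirement $r\,\|g_k(\hat{\mathbf{h}}_k+\boldsymbol{\delta}_k)^T\mathbf{P}-\mathbf{e}_k^T\|^2\le\lambda_k$ for all $\|\boldsymbol{\delta}_k\|\le\rho_k$ is quadratic in $\boldsymbol{\delta}_k$ over a ball; applying the S-procedure (lossless here, since the uncertainty region is a single quadratic constraint and Slater holds) yields an equivalent quadratic matrix inequality with one multiplier $\nu_k\ge0$, and a Schur complement against a central $r^{-1}\mathbf{I}$ block linearizes the $g_k^2\mathbf{P}\mathbf{P}^T$ and $\|g_k\hat{\mathbf{h}}_k^T\mathbf{P}-\mathbf{e}_k^T\|^2$ terms, producing exactly the first $3\times3$ block LMI. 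For the shot-noise term, $\varsigma^2\sigma^2g_k^2(\hat{\mathbf{h}}_k+\boldsymbol{\delta}_k)^T\mathbf{b}\le\tilde\lambda_k$ is affine in $\boldsymbol{\delta}_k$; the same S-procedure with multiplier $\tilde\nu_k\ge0$, followed by a congruence rescaling that moves the factor $(\tfrac{1}{2}\varsigma^2\sigma^2g_k^2)^2$ into the $(1,1)$ entry and $\rho_k$ onto the off-diagonal, gives the second $2\times2$ block LMI. The lighting constraints $abs(\mathbf{P})\boldsymbol{\delta}-\mathbf{b}\le0$ and $\mathbf{1}^T\mathbf{b}=\beta P_T$ do not involve $\mathbf{h}_k$ and are carried over verbatim, so collecting all pieces yields \eqref{51}.

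I expect the main obstacle to be the bookkeeping in the S-procedure and Schur-complement step: verifying that the quadratic matrix inequality delivered by the S-lemma coincides, after the right congruence transformation and rescaling of $\nu_k$ and $\tilde\nu_k$, with the precise block structures in \eqref{51} (in particular getting $r^{-1}$, $\rho_k$, and $(\tfrac{1}{2}\varsigma^2\sigma^2g_k^2)^2$ to land in the correct entries), and confirming that the S-procedure is applied losslessly so the LMIs are tight per-term characterizations rather than merely sufficient. The only genuine relaxation is the term-wise split of the shared $\boldsymbol{\delta}_k$ within each user, which I would explicitly flag as the source of the (mild) conservatism in the worst-case guarantee.
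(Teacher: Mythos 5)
Your proposal is correct and takes essentially the same route as the paper's proof: freeze the $g_k$ at the Wiener filters built from $\hat{\mathbf{h}}_k$, introduce the slack variables $\lambda_0,\lambda_k,\tilde{\lambda}_k$ to bound the thermal-noise, data-plus-interference, and shot-noise terms separately over each user's uncertainty ball, and convert the two resulting semi-infinite constraints into the LMIs of \eqref{51}. The only difference is packaging: the paper applies the Schur complement to the quadratic constraint first and then eliminates $\boldsymbol{\delta}_k$ with a matrix-uncertainty lemma (the Lemma stated in Appendix B, a Petersen-type variant of the S-lemma), whereas you invoke the scalar S-procedure first and Schur-complement afterwards; these orderings are equivalent and land on the same block matrices, and your explicit observation that the per-term split of the shared $\boldsymbol{\delta}_k$ is the sole source of conservatism makes precise a point the paper leaves implicit.
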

\begin{proof}
See Appendix B.
\end{proof}

\subsection{Uplink Transceiver Design and Comments on Duality}
\begin{figure}[htbp]
\centerline{\includegraphics[width=1.0\columnwidth]{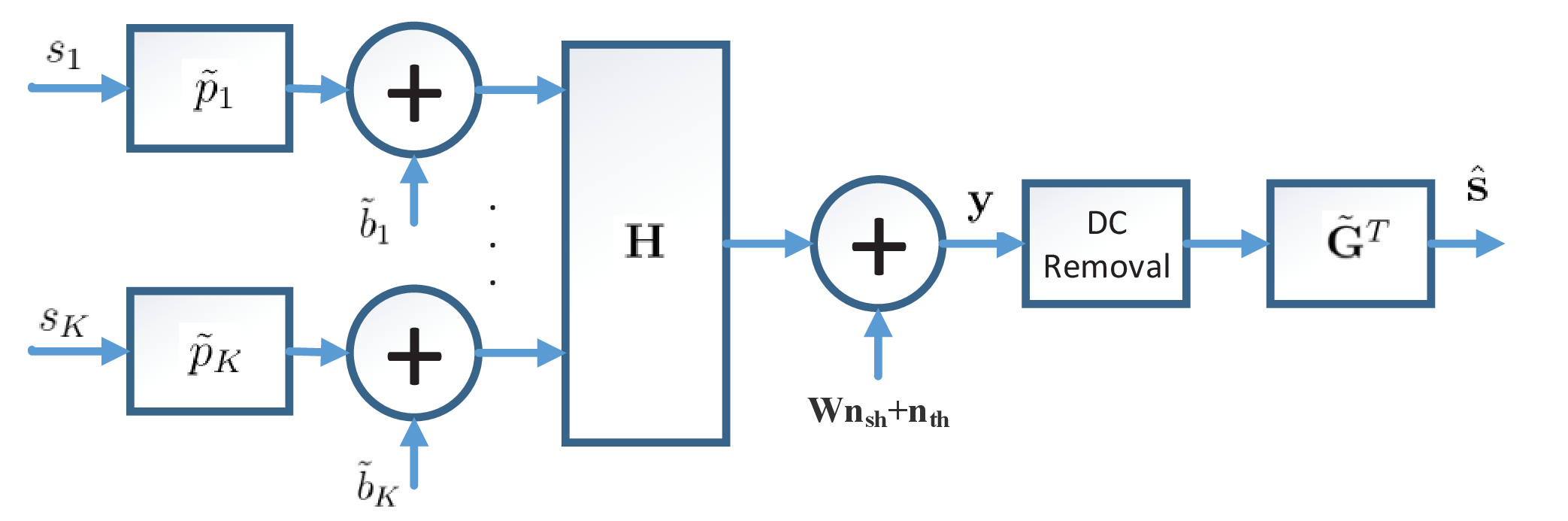}}
\vspace{-0.1in}
\caption{MU-MISO Uplink VLC system block diagram.}
\end{figure}

Consider the uplink multiple access channel in Figure 3, whose equivalence to the downlink counterpart is not (dis)proved for the visible light channel, although the duality relationship is well-known for radio frequency channel. The received intensity at the access point is
\begin{equation}
\mathbf{y}=\sum_{k=1}^K\mathbf{h}_k(\tilde{p}_ks_k+\tilde{b}_k)+\text{diag}^{1/2}\big(\sum_{k=1}^K
\mathbf{h}_k(\tilde{p}_ks_k+\tilde{b}_k)\big)\mathbf{n}_{sh}+\mathbf{n}_{th}.
\end{equation}
Removal of the constant term leads to
\begin{equation}
\bar{\mathbf{y}}=\sum_{k=1}^K\mathbf{h}_k\tilde{p}_ks_k+\text{diag}^{1/2}\big(\sum_{k=1}^K
\mathbf{h}_k(\tilde{p}_ks_k+\tilde{b}_k)\big)\mathbf{n}_{sh}+\mathbf{n}_{th}.
\end{equation}

The matrix $\tilde{\mathbf{G}}^T$, in this case, serves as the multiuser receiver, and the $k$-th recovered symbol is found in \eqref{46} and the MSE of the $k$-th link is found in \eqref{47}, both at the top of the next page.

\newcounter{FandQ7}
\begin{figure*}[ht]
 \setcounter{FandQ7}{\value{equation}}
\setcounter{equation}{44}
\begin{align}
\hat{s}_k=\tilde{\mathbf{g}}_k^T\mathbf{h}_k\tilde{p}_ks_k+\tilde{\mathbf{g}}_k^T\sum_{j\neq k}\mathbf{h}_j\tilde{p}_js_j+\tilde{\mathbf{g}}_k^T\text{diag}^{1/2}\big(\sum_{j=1}^K
\mathbf{h}_j(\tilde{p}_js_j+\tilde{b}_j)\big)\mathbf{n}_{sh}+\tilde{\mathbf{g}}_k^T\mathbf{n}_{th}.\label{46}
\end{align}
\begin{align}
\text{MSE}_k^{UL}&=\mathbb{E}_{\mathbf{s},\mathbf{n}_{sh},\mathbf{n}_{th}}||\hat{s}_k-s_k||_2^2=\sum_{j=1}^Kr\tilde{p}_k^2\tilde{\mathbf{g}}_k^T\mathbf{h}_j\mathbf{h}_j^T\tilde{\mathbf{g}}_k-2r\tilde{p}_k\tilde{\mathbf{g}}_k^T
\mathbf{h}_k+\sigma^2\tilde{\mathbf{g}}_k^T\tilde{\mathbf{g}}_k+r
+\underbrace{\varsigma^2\sigma^2\tilde{\mathbf{g}}_k^T\text{diag}(\sum_{j=1}^K\mathbf{h}_j\tilde{b}_j)\tilde{\mathbf{g}}_k}_{\text{shot noise induced term}}.\label{47}
\end{align}
\hrulefill \setcounter{equation}{\value{FandQ7}}
\end{figure*}

\setcounter{equation}{46}

The downlink uplink MSE duality is achieved if
\begin{equation}
\text{MSE}^{UL}_k=\text{MSE}^{DL}_k~~\forall k.\label{52}
\end{equation}

For the RF counterpart, the duality is studied on condition that electrical transmission power being the same. For VLC, duality requires the optical power consumption to equal for uplink and downlink channels, i.e.
\begin{equation}
\mathbf{1}^T\mathbf{b}=\mathbf{1}^T\mathbf{\tilde{b}},
\end{equation}
where $\mathbf{\tilde{b}}=[\tilde{b}_1~\tilde{b}_2~\ldots\tilde{b}_K]^T$. Besides, the non-negative constraints are unique to VLC and need to be guaranteed.

It is known that without the shot noise induced term, the equation \eqref{52} is satisfied when
\begin{equation}
\mathbf{p}_k=\alpha_k\tilde{\mathbf{g}}_k~\text{and}~g_k=\alpha_k^{-1}\tilde{p}_k,\label{56}
\end{equation}
where $\alpha_k$ is the solution to the following equation \cite{Amine}
\begin{equation}
\mathbf{C}
\begin{bmatrix}
\alpha_1^2 \\
\vdots \\
\alpha_K^2
\end{bmatrix}=\sigma^2
\begin{bmatrix}
\tilde{p}_1^2 \\
\vdots \\
\tilde{p}_K^2
\end{bmatrix},
\end{equation}
where
\begin{equation}
\mathbf{C}_{k,j} =
\begin{cases}
\sum_{i\neq k}\tilde{p}_i^2\tilde{\mathbf{g}}_k^T\mathbf{h}_i\mathbf{h}_i^T\tilde{\mathbf{g}}_k+\sigma^2\tilde{\mathbf{g}}_k^T\tilde{\mathbf{g}}_k & k=j\\
-\tilde{p}_k^2\tilde{\mathbf{g}}_j^T\mathbf{h}_k\mathbf{h}_k^T\tilde{\mathbf{g}}_j & k\neq j.
\end{cases}
\end{equation}

To verify whether duality holds, it is required that the shot noise induced downlink and uplink terms be compared with the condition in \eqref{56}. The shot noise induced term for downlink is given by
\begin{equation}
\text{MSE}_{k,sh}^{DL}=\varsigma^2\sigma^2\alpha_k^{-2}\tilde{p}_k^2\mathbf{h}_k^T\mathbf{b},
\end{equation}
while the shot noise induced term for uplink is given by
\begin{equation}
\text{MSE}_{k,sh}^{UL}=\varsigma^2\sigma^2\alpha_k^{-2}\mathbf{p}_k^T\text{diag}(\sum_{j=1}^K\mathbf{h}_j\tilde{b}_j)\mathbf{p}_k.
\end{equation}
There is no guarantee of the equality of these two terms. Thus, duality does not hold for VLC downlink and uplink with an existence of signal-dependent shot noise. Therefore, the two problems should be studied independently. The uplink problem minimizing the sum MSE is formulated as
\begin{equation}
\begin{aligned}
& \underset{\tilde{\mathbf{G}},\tilde{\mathbf{p}},\tilde{\mathbf{b}}}{\text{min}}
& & {\sum_{k=1}^K\text{MSE}^{UL}_k} \\
& \text{s.t.}
&&
abs(\tilde{p}_k)\delta-\tilde{b}_k\leq 0,~~\forall k\\
&&& \mathbf{1}^T\mathbf{\tilde{b}}=\beta P_T,\label{56}
\end{aligned}
\end{equation}
where $\tilde{\mathbf{G}}=[\tilde{\mathbf{g}}_1,\tilde{\mathbf{g}}_2,\ldots,\tilde{\mathbf{g}}_K]$. The optimal $\tilde{\mathbf{g}}_k$ is also a Wiener filter
\begin{equation}
\tilde{\mathbf{g}}_k=r\tilde{p}_k\bigg(\tilde{p}_k^2\mathbf{H}\mathbf{H}^T+\varsigma^2\sigma^2
\text{diag}\bigg(\sum_{j=1}^K\mathbf{h}_j\tilde{b}_j\bigg)+\sigma^2\mathbf{I}\bigg)^{-1}\mathbf{h}_k\label{57}
\end{equation}
By plugging \eqref{57} into \eqref{56}, $\mathbf{P}$ and $\mathbf{b}$ can be obtained by solving the resulting problem with a similar gradient projection procedure as given in Algorithm 1.

\section{Simulation Results}\label{sec5}
In this section, we compare the performance of the proposed methods with the conventional scaled zero-forcing (ZF) based and singular value decomposition (SVD) based pre-equalization methods. For the ZF-based method, we have
\begin{equation}
\mathbf{P}=\mu_1\mathbf{H}^{-1},~~\mathbf{G}=\frac{1}{\mu_1}\mathbf{I},~~\mathbf{b}=\frac{\beta P_T}{3N_t}\mathbf{1},
\end{equation}
where
\begin{equation}
\mu_1=\min_k\frac{b_k}{[abs(\mathbf{H}^{-1})]_{(k,:)}\cdot\mathbf{1}d},
\end{equation}
where $[\mathbf{A}]_{(k,:)}$ denotes the $k$-th row of $\mathbf{A}$. Thus, $\mu_1$ is a positive number to guarantee the positivity of intensity on all LEDs. The offset $\mathbf{b}$ is a fixed, and all the LEDs share the DC power. The recovered symbols are
\begin{equation}
\hat{\mathbf{s}}=\mathbf{s}+\frac{1}{\mu_1}\mathbf{W}^{ZF}\mathbf{n}_{sh}+\frac{1}{\mu_1}\mathbf{n}_{th},
\end{equation}
where
\begin{equation}
W^{ZF}_{i,i}=\sqrt{\mathbf{e}_i^T(\mu_1\mathbf{s}+\mathbf{H}\mathbf{b})},~~\forall i.
\end{equation}
For the SVD-based method, we decompose the channel $\mathbf{H}=\mathbf{U}\boldsymbol{\Sigma}\mathbf{V}^T$ and set
\begin{equation}
\mathbf{P}=\mu_2\mathbf{V}\boldsymbol{\Sigma}^{-1},~~\mathbf{G}=\frac{1}{\mu_2}\mathbf{U}^T,~~\mathbf{b}=\frac{\beta P_T}{3N_t}\mathbf{1},
\end{equation}
where $\mu_2$ is calculated in a similar way to $\mu_1$. The recovered symbols are
\begin{equation}
\hat{\mathbf{s}}=\mathbf{s}+\frac{1}{\mu_2}\mathbf{U}^T\mathbf{W}^{SVD}\mathbf{n}_{sh}+\frac{1}{\mu_2}\mathbf{U}^T\mathbf{n}_{th},
\end{equation}
where
\begin{equation}
W^{SVD}_{i,i}=\sqrt{\mathbf{e}_i^T(\mu_2\mathbf{H}\mathbf{V}\boldsymbol{\Sigma}^{-1}\mathbf{s}+\mathbf{H}\mathbf{b})},~~\forall i.
\end{equation}

\subsection{MIMO Point-to-Point VLC}
For illustrative purpose, we study the cases with two RGB LED and detectors, i.e. $N_t=N_r=2$ and for simplicity we assume $K=3N_t=3N_r$. If only one RGB LED is used, the joint transceiver design and offset optimization problem degrades to a transceiver design problem with a fixed bias, thus assuming two RGB LEDs offers the simplest case for our study.

For JTOD Case 1$-$Case 5 in the followings, system parameters are chosen as: the constellation size of each color channel $M=4$; signal takes value from a 4-PAM constellation $\{-3,-1,1,3\}$ such that the variance $r=5$; the transmit optical power $P_T=30$; the color ratio vector $\bar{\mathbf{b}}=[1/3~1/3~1/3]^T$; the number of local runs is $N_I=20$; the stopping threshold values is $\epsilon_P=10^{-3}$. The shot noise scaling factor and the thermal noise variance are $\varsigma^2=1$ and $\sigma^2=0.01$ when the results in the cases are generated, but we point out later that these values can vary without effect on the optimized structures of the transceiver and offset.

\begin{list2}

\item \textit{Case 1 (Perfect):} We first give some intuition on the design with an identity channel matrix
\begin{equation*}
\mathbf{H}=\mathbf{I}_6,
\end{equation*}
i.e. no channel correlation or color cross-talks is considered. The following optimized precoder is obtained (results are rounded up to 2 decimal places)
\begin{equation*}
\mathbf{P}^*=\text{diag}([1.67~ 1.67~ 1.67~ 1.67~ 1.67~ 1.67]),
\end{equation*}
the optimized offset is
\begin{equation*}
\mathbf{b}^*=[5.00~ 5.00~ 5.00~ 5.00~ 5.00~ 5.00]^T,
\end{equation*}
and the optimized post-equalizer is
\begin{align*}
\mathbf{G}^*=\text{diag}([0.59~ 0.59~ 0.59~ 0.59~ 0.59~ 0.59]).
\end{align*}
It is expected that equal offset allocation is the optimal.

\item \textit{Case 2 (Blockage):} Assume that the second RGB LED is blocked with neither correlation nor color cross-talk, i.e.
\begin{equation*}
\bar{\mathbf{H}}=
\begin{bmatrix}
1 & 0\\
0 & 0
\end{bmatrix},
~~\tilde{\mathbf{H}}=\mathbf{I}_3.
\end{equation*}

The optimized precoder is
\begin{equation*}
\mathbf{P}^*=
\begin{bmatrix}
   0 &  0     &    0    &     0    &     0    &     10/3\\
   0  & 0   &      0    &     0    &     0   &      0\\
         0     &    10/3  & 0  & 0   &     0   &      0\\
         0     &    0   & 0 &  0   &      0   &      0\\
         0     &    0    &     0   &      10/3 &  0 & 0\\
         0      &   0     &    0   &      0 & 0 &  0
\end{bmatrix}.
\end{equation*}
The optimized offset is
\begin{equation*}
\mathbf{b}^*=[10.00~ 0~ 10.00~ 0~ 10.00~ 0]^T,
\end{equation*}
and the optimized post-equalizer is
\begin{equation*}
\mathbf{G}^*=
\begin{bmatrix}
   0 &  0     &    0    &     0    &     0    &     0\\
   0  & 0   &      0.29    &     0    &     0   &      0\\
         0     &    0  & 0  & 0   &     0   &      0\\
         0     &    0   & 0 &  0   &      0.29   &      0\\
         0     &    0    &     0   &      0 &  0 & 0\\
         0.29      &   0     &    0   &      0 & 0 &  0
\end{bmatrix}.
\end{equation*}
It is expected that no offset power is allocated to the blocked RGB LED, and its signal is nulled by the precoder.

\item \textit{Case 3 (Correlation):} The ChC and CoC parts of channel are chosen as
\begin{equation*}
\bar{\mathbf{H}}=
\begin{bmatrix}
1 & 0.5\\
0.25 & 1
\end{bmatrix},
~~\tilde{\mathbf{H}}=\mathbf{I}_3,
\end{equation*}
where asymmetric channel correlation is considered and no color cross-talk is assumed. More interference is caused from the second LED to the first detector than from the first LED to the second detector. The optimized precoder is
\begin{equation*}
\mathbf{P}^*=\text{diag}([1.77~ 1.56~ 1.77~ 1.56~ 1.77~ 1.56]),
\end{equation*}
The optimized offset is
\begin{equation*}
\mathbf{b}^*=[5.33~ 4.67~ 5.33~ 4.67~ 5.33~ 4.67]^T,
\end{equation*}
and the optimized post-equalizer is
\begin{align*}
&\mathbf{G}^*=\notag\\
&\begin{bmatrix}
    0.64  & -0.32    &     0   &      0    &     0     &    0\\
   -0.18  &  0.73    &     0   &      0    &     0     &    0\\
         0   &      0  &  0.64 &  -0.32  &       0   &      0\\
         0   &      0  & -0.18 &   0.73  &       0   &      0\\
         0   &      0  &       0 &        0  &  0.64 &  -0.32\\
         0   &      0  &       0 &        0  & -0.18 &   0.73
\end{bmatrix}.
\end{align*}
It is observed that more power is allocated to the first RGB LED, which is expected since it cause less interference to the second RGB LED.

\item \textit{Case 4 (Color Cross-talks):} The ChC and CoC parts of channel are chosen as
\begin{equation*}
\bar{\mathbf{H}}=\mathbf{I}_2,
~~\tilde{\mathbf{H}}=\begin{bmatrix}
0.95   & 0.05 & 0\\
0.05   & 0.90 & 0.05\\
0      & 0.05 & 0.95
\end{bmatrix},
\end{equation*}
where medium CoC is considered and no channel correlation is assumed. The optimized precoder is
\begin{equation*}
\mathbf{P}^*=\text{diag}([1.67~ 1.67~ 1.67~ 1.67~ 1.67~ 1.67~]).
\end{equation*}
The optimized offset is
\begin{equation*}
\mathbf{b}^*=[5.00~ 5.00~ 5.00~ 5.00~ 5.00~ 5.00]^T,
\end{equation*}
and the optimized post-equalizer is
\begin{align*}
&\mathbf{G}^*=\notag\\
&\begin{bmatrix}
    0.63     &    0 &   -0.03   &       0  &   0   &       0\\
         0  &   0.63  &        0  &  -0.03   &       0 &    0\\
   -0.03     &     0  &   0.67   &       0 &   -0.03  &        0\\
         0  &  -0.03  &        0  &   0.67   &       0  &  -0.03\\
    0     &    0 &   -0.03    &     0   &  0.63   &       0\\
         0 &    0 &         0  &  -0.03   &       0  &   0.63
\end{bmatrix}.
\end{align*}
An interesting observation is that the optimal $\mathbf{P}^*$ and $\mathbf{b}^*$ are the same with the no cross-talks case.

\item \textit{Case 5 (Correlation+Color Crosstalks):} The ChC and CoC parts of channel are chosen as
\begin{equation*}
\bar{\mathbf{H}}=
\begin{bmatrix}
1 & 0.50\\
0.25 & 1
\end{bmatrix}
~~\tilde{\mathbf{H}}=
\begin{bmatrix}
0.95   & 0.05 & 0\\
0.05   & 0.90 & 0.05\\
0      & 0.05 & 0.95
\end{bmatrix},
\end{equation*}
where both medium color cross-talk and asymmetric channel correlation are assumed. The optimized precoder is
\begin{equation*}
\mathbf{P}^*=
\begin{bmatrix}
   1.80  &  0 &    0  &  0 &   0  &   0\\
    0  &  1.54  &   0  &  0 &   0  &   0\\
    0  &   0  &   0 &   1.80  &   0 &   0\\
    0  &   0 &   1.54  &   0 &  0  &   0\\
   0  &   0 &   0  &   0  &  1.80  &  0\\
    0  &  0 &   0 &    0  &   0  &  1.54
\end{bmatrix}.
\end{equation*}
The optimized offset is
\begin{equation*}
\mathbf{b}^*=[5.41~ 4.59~ 5.41~ 4.59~ 5.41~ 4.59]^T,
\end{equation*}
and the optimized post-equalizer is
\begin{align*}
&\mathbf{G}^*=\notag\\
&\begin{bmatrix}
    0.66  & -0.33 &  -0.04 &   0.02  &  0  &  0\\
   -0.19 &   0.77 &  -0.01 &  -0.04   & 0 &   0\\
    0.01  &  -0.04 &  -0.20  &  0.82 &  0.01 &  -0,04\\
   -0.04 &   0.02 &   0.70 &  -0.35 &  -0.04 &   0.02\\
    0  &  0 &  -0.04  &  0.02 &   0.66 &  -0.33\\
    0  &  0 &  0.01  & -0.04 &  -0.19  &  0.77
\end{bmatrix}.
\end{align*}
\end{list2}
Note that it is more practical to assume medium-to-low CoC in practice as long as the quality of color filters at the Rx is acceptable. By simulations it is seen that the optimal $\mathbf{P}^*$ and $\mathbf{b}^*$ remain the same if high CoC is considered ($\xi=0.20$). This observation, along with the one in case 4, implied that the optimal structure of the transmitter and offset do not change with CoC.

Then we test the symbol error rate (SER) performance of the case 1 and case 5 across a particular range of noise and power parameters in Fig. 5$-$Fig. 7, on an Intel Core TM i5-3337U (1.80GHz) processor. With $\varsigma^2=1$ and $P_T=30$ fixed, the SER performance comparison among the three schemes for varying thermal noise variance is shown in Fig. 5. With $\sigma^2=0.01$ and $P_T=30$ fixed, the SER performance comparison for varying shot noise scaling factor is shown in Fig. 6. With $\sigma^2=0.01$ and $\varsigma^2=1$ fixed, the SER performance comparison for varying optical power is shown in Fig. 7. It is seen from these figures that for Case 5 the JTOD has large performance gain, but with the perfect channel assumed in Case 1 simpler schemes work out about equally well. The JTOD also outperforms the rest of schemes for Case 3 and Case 4, although results are not shown here. For Case 2, all schemes are not giving a satisfactory result for the range of parameters tested. In Fig. 8, the convergence behavior of our algorithm is plotted, when $\sigma^2=0.01$, $\varsigma^2=1$ and $P_T=30$ are associated. The gradient projection based procedure converges fast, although it may fail on condition that extreme values of parameters are set, e.g. when the channel matrix is very badly conditioned. The averaged processing time over $100$ iterations is summarized in Table 1, which shows good scalability of our method.

\newcounter{FandQ2}
\begin{figure*}[hb]
\hrulefill \setcounter{FandQ}{\value{equation}}
\setcounter{equation}{10}
\begin{table}[H]
\label{tab:fonts}
\centering
\begin{tabular}{|c|c|c|c|c|c|c|c|c|c|} 
\hline
\rule[-1ex]{0pt}{3.5ex}  \backslashbox {} & $N_t=2$ & $N_t=3$ & $N_t=4$ & $N_t=6$ & $N_t=8$ & $N_t=10$ & $N_t=15$ & $N_t=20$ & $N_t=25$   \\
\hline
\rule[-1ex]{0pt}{3.5ex}   \text{Processing Time (sec)}& 1.728 & 1.744 & 1.906 & 2.345 & 2.808  & 3.310 & 3.348 & 5.529 & 10.450\\
\hline
\end{tabular}
\captionsetup{justification=centering}
\caption{MATLAB running time for each iteration.}\label{table2}
\end{table}
\setcounter{equation}{\value{FandQ}}
\end{figure*}

\begin{figure}[htbp]
\centerline{\includegraphics[width=1.1\columnwidth]{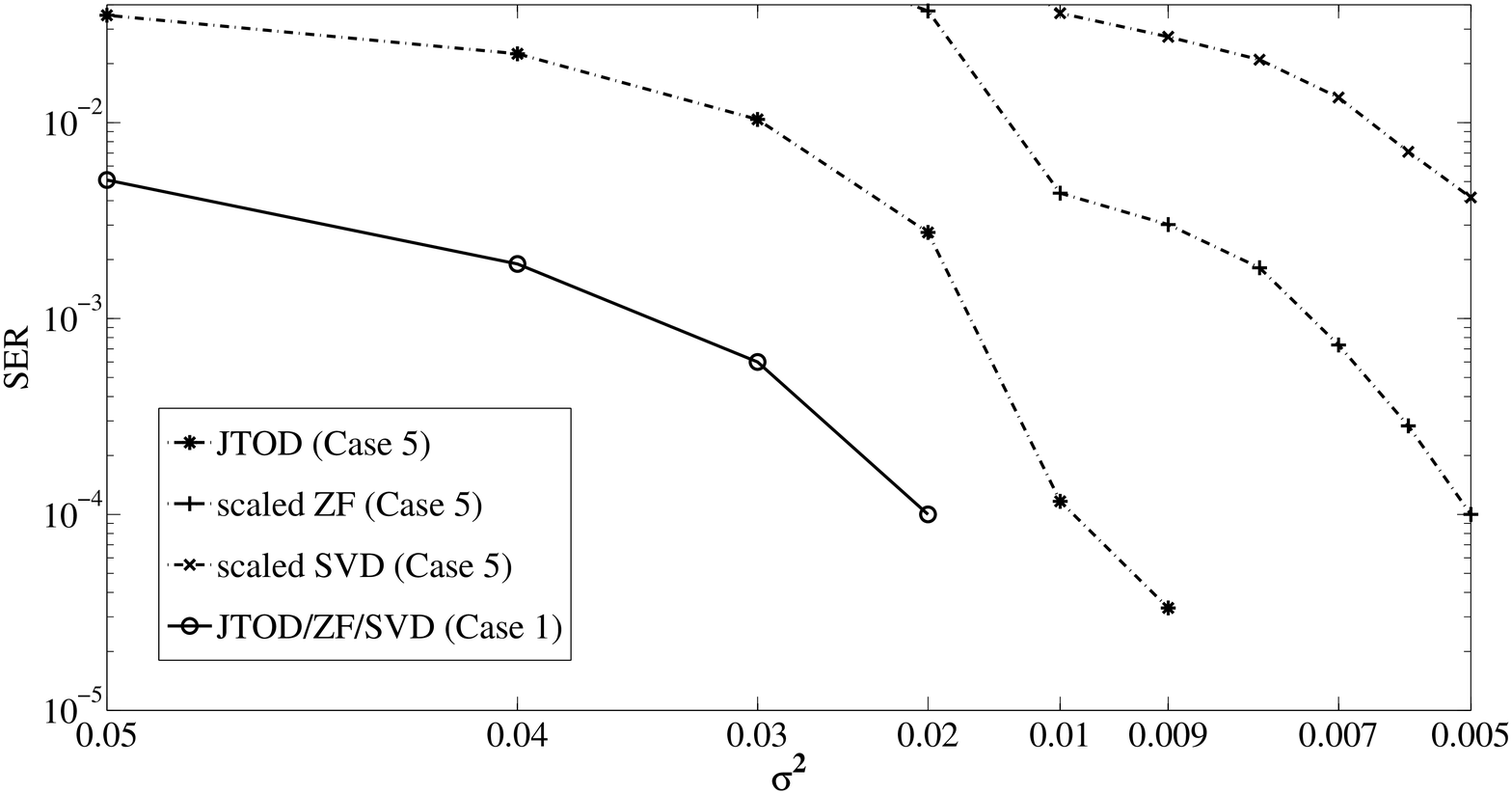}}
\vspace{-0.1in}
\caption{SER performance comparison between schemes for case 1 and case 5 channels with varying $\sigma^2=1$.}
\end{figure}

\begin{figure}[htbp]
\centerline{\includegraphics[width=1.1\columnwidth]{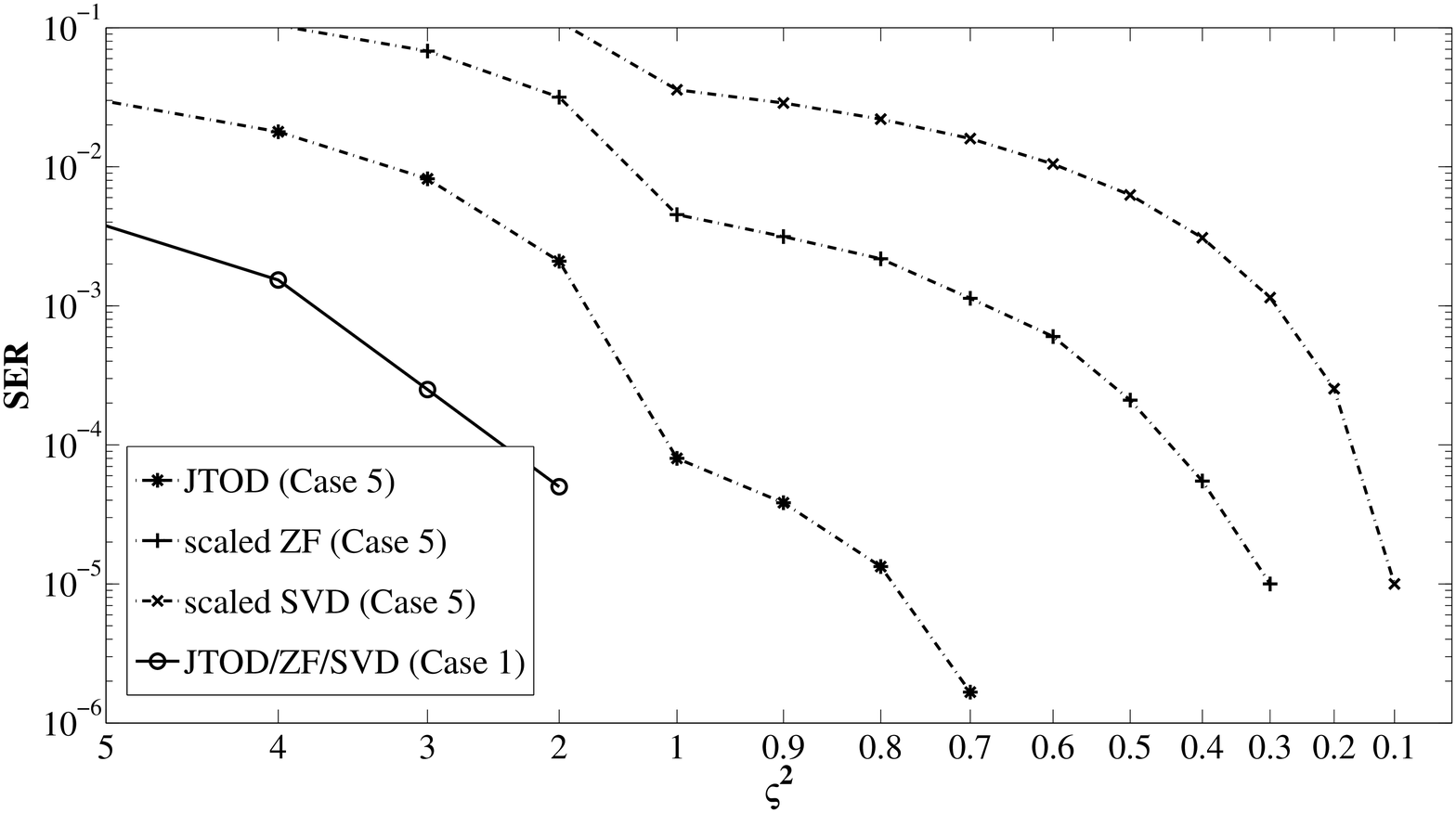}}
\vspace{-0.1in}
\caption{SER performance comparison between schemes with Case 1 and Case 5 channels varying $\varsigma^2$; MIMO P2P.}
\end{figure}

\begin{figure}[htbp]
\centerline{\includegraphics[width=1.1\columnwidth]{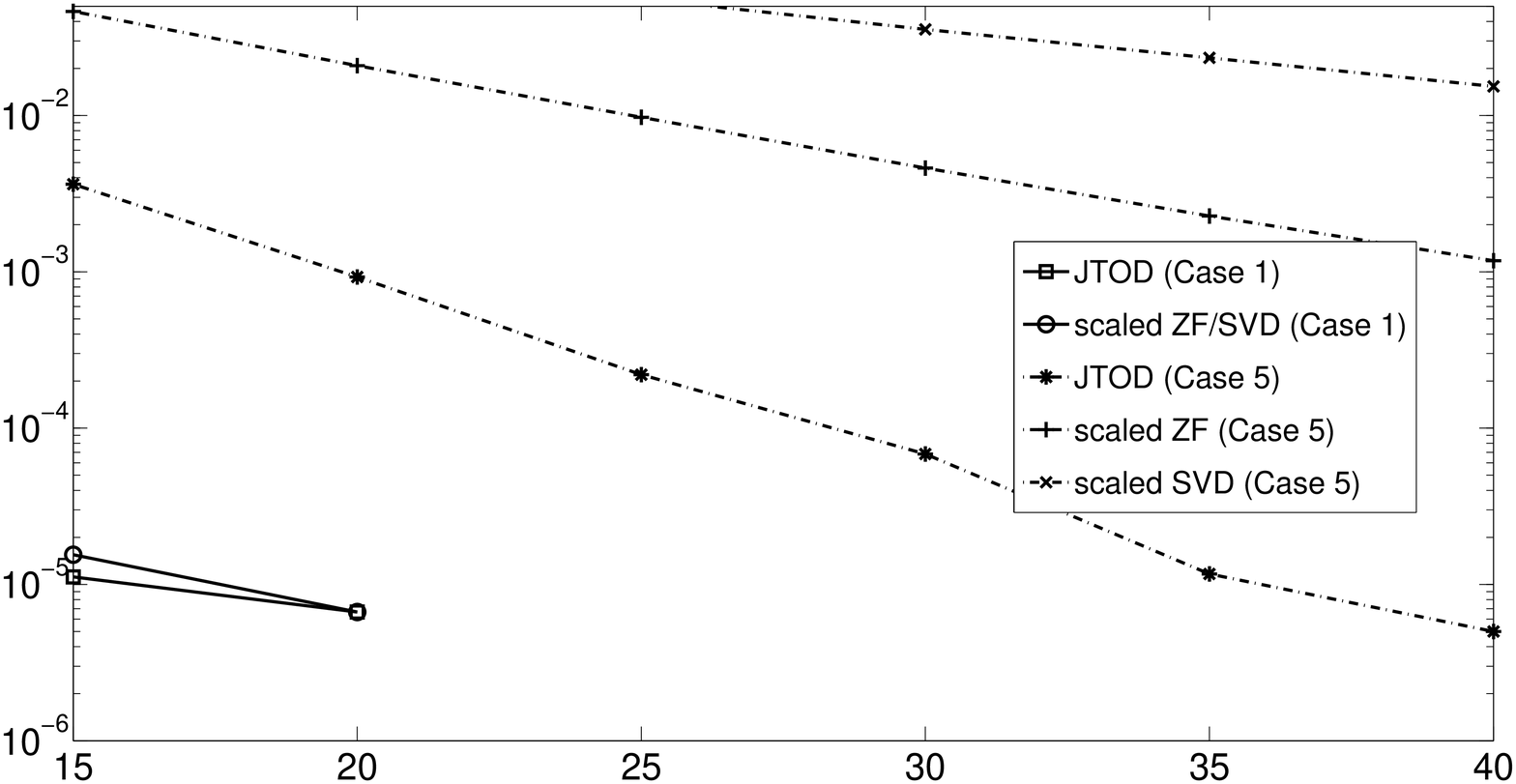}}
\vspace{-0.1in}
\caption{SER performance comparison between schemes with Case 1 and Case 5 channels varying $P_T$; MIMO P2P.}
\end{figure}

\begin{figure}[htbp]
\centerline{\includegraphics[width=1.1\columnwidth]{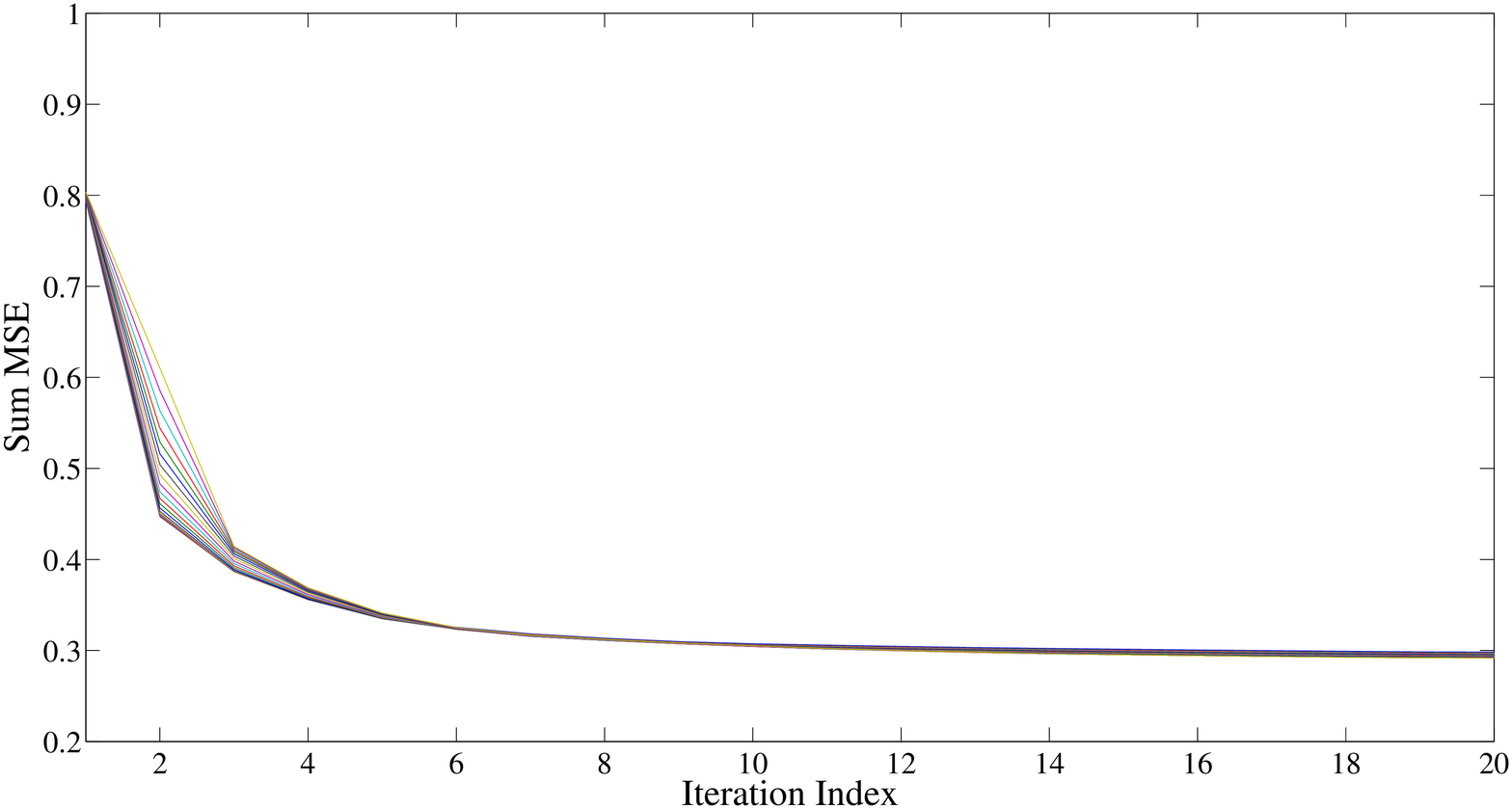}}
\vspace{-0.1in}
\caption{Convergence behavior of the JTOD algorithm, $\sigma^2=0.01$, $\varsigma^2=1$.}
\end{figure}

\subsection{MU-MISO Downlink VLC}

We compare our scheme with the scaled ZF scheme for the downlink problem under the sum MSE constraint as shown by \eqref{39}. A small size example is studied with $N_t=4$ white LEDs serving $K=4$ users respectively. Symbols of each link take values from the same 4-PAM constellation $\{-3,-1,1,3\}$ assumed in previous sections. We only consider the scenario where LEDs are separated enough such that only adjacent ones interfere with each other, and the following normalized channel matrix is used in the simulations
\begin{align*}
&\mathbf{H}=
\begin{bmatrix}
   1 &  0.25    &  0 & 0 \\
   0.25    &  1   &  0.25  & 0 \\
   0    &  0.25  & 1 & 0.25\\
   0  & 0 & 0.25 & 1
\end{bmatrix}.
\end{align*}

In Fig. 8 - Fig. 10, the performance comparison between the JTOD and the scaled ZF across particular noise and power parameter ranges are shown. With $\varsigma^2=0.5$ and $P_T=30$, the SER performance comparison for varying thermal noise variance is shown in Fig. 8. With $\varsigma^2=0.01$ and $P_T=20$, the SER performance comparison for varying shot noise scaling factor is shown in Fig. 9. With $\varsigma^2=0.01$ and $\varsigma^2=0.5$, the SER performance comparison for varying optical power is shown in Fig. 10. It is observed that the JTOD scheme outperforms the scaled ZF scheme non-trivially for most cases, but the performance gain can shrink when the signal is much stronger than noise, e.g. the end of the bottom two lines of Fig. 9 and Fig. 10. In Figure 11, we compare the worst-case SER performance of the robust and the regular JTOD under downlink MU-MISO channel when the noise variance $\sigma^2=0.001$ and $\sigma^2=0$ respectively (when there is no noise, CSI error is still possible because of, e.g. quantization error). The robust scheme is shown to offer better worst-case SER performance over the regular scheme over a range of channel uncertainty values.


\begin{figure}[H]
\centerline{\includegraphics[width=1.1\columnwidth]{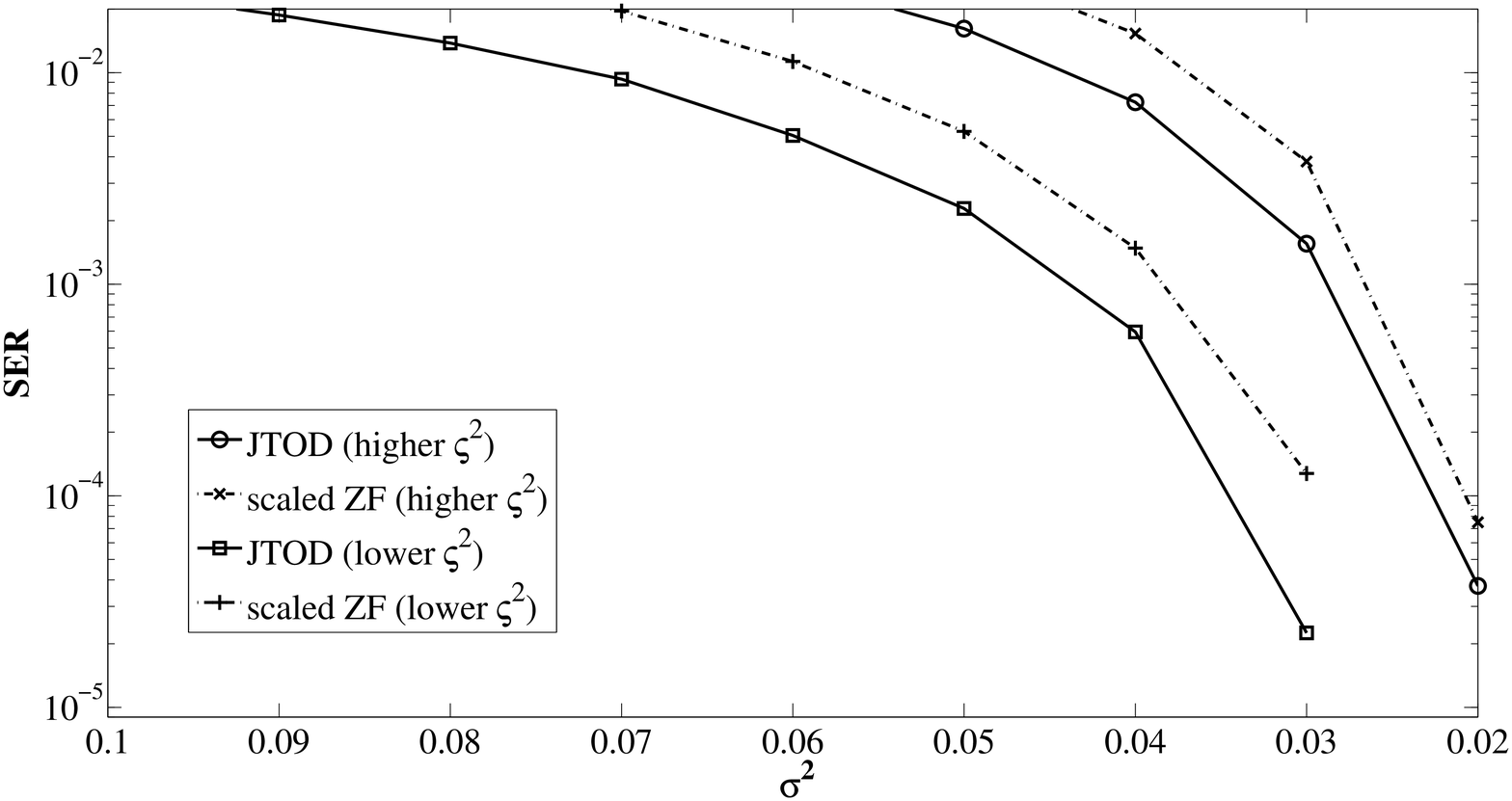}}
\vspace{-0.1in}
\caption{SER performance comparison between JTOD and scaled ZF varying $\sigma^2$; downlink.}
\end{figure}

\begin{figure}[H]
\centerline{\includegraphics[width=1.1\columnwidth]{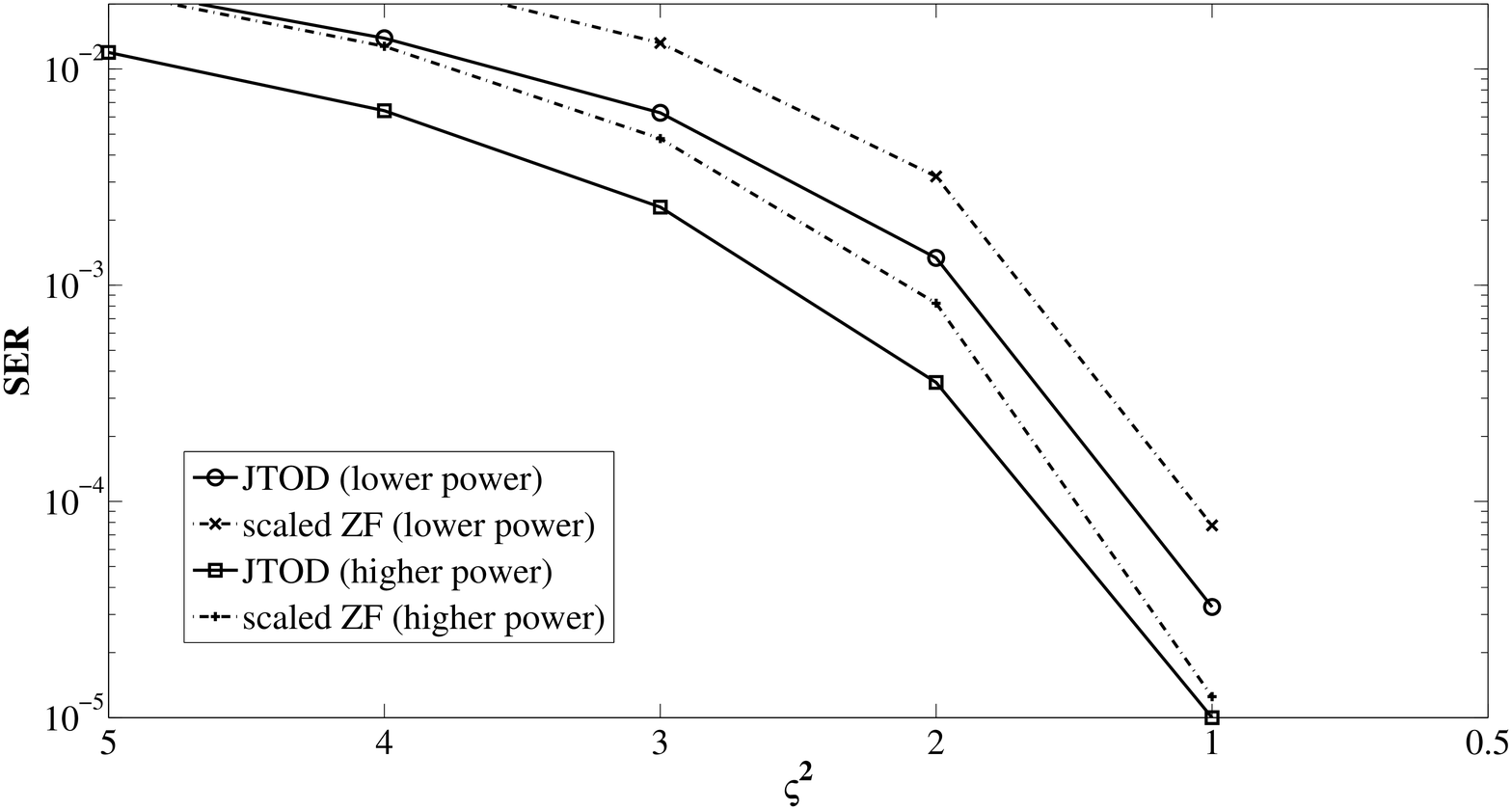}}
\vspace{-0.1in}
\caption{SER performance comparison between JTOD and scaled ZF varying $\varsigma^2$; downlink.}
\end{figure}

\begin{figure}[H]
\centerline{\includegraphics[width=1.1\columnwidth]{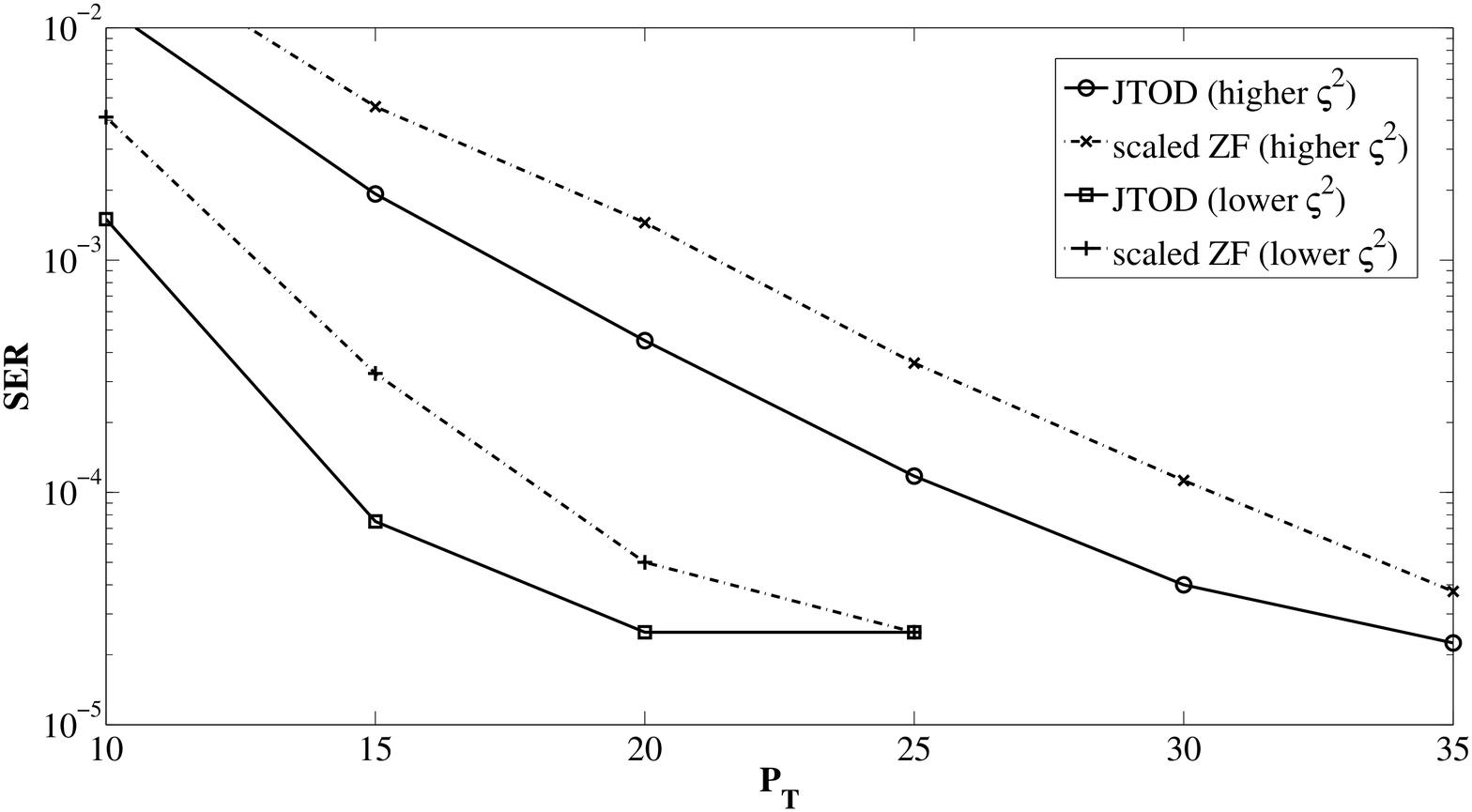}}
\vspace{-0.1in}
\caption{SER performance comparison between JTOD and scaled ZF varying $P_T$; downlink.}
\end{figure}

\begin{figure}[H]
\centerline{\includegraphics[width=1.1\columnwidth]{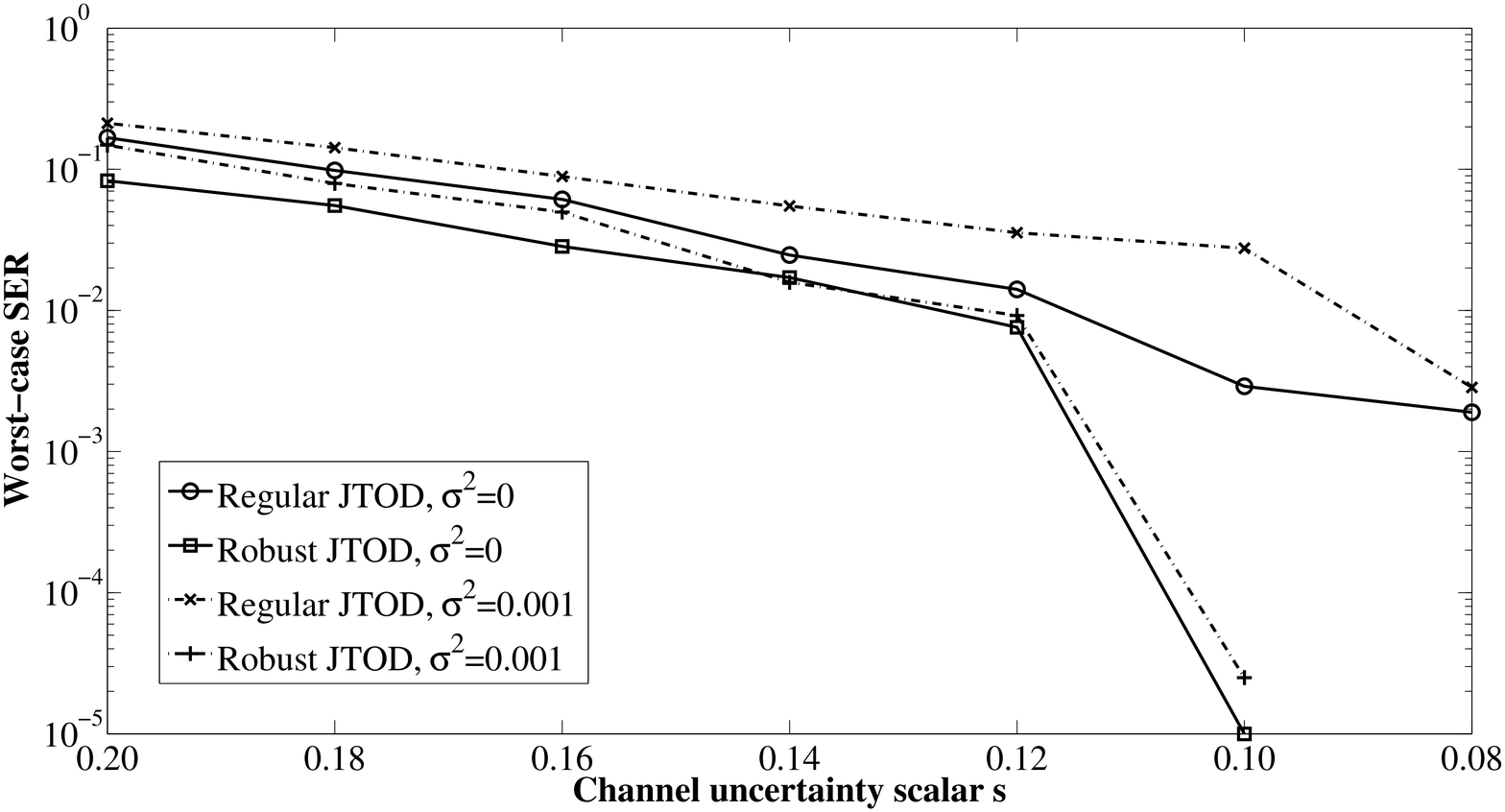}}
\vspace{-0.1in}
\caption{SER performance comparison between the robust and regular JTOD for the downlink MU-MISO channel.}
\end{figure}

\section{Conclusions}\label{sec6}
We have proposed a joint transceiver and offset design method, termed JTOD, for visible light communications in this paper. Both the single user point-to-point and the multi-user uplink and downlink systems are considered, with lighting constraints and input-dependent noise taken into account. Case by case studies are provided for different scenarios, e.g. perfect channel, blockage channel, color-crosstalk channel, and correlated channel and the best structures of the transceivers and offsets obtained correspondingly are presented. These structures when the channel is fixed are observed to be unique for different noise variances and only scale with the optical power. For the point-to-point systems, the JTOD outperforms conventional transceiver design methods with fixed offset, such as ZF and SVD based ones. For the multi-user downlink systems, the performance of ZF approaches the JTOD when the signal is much stronger than noise. We also prove in this paper that different from the RFC counterpart, the downlink-uplink duality does not hold for such type of design. Further, the design of robust JTOD is included, when perfect CSI is not available.

\appendices
\section{Proof of Proposition 1}
Define
\begin{equation}
g(\mathbf{P})={tr\bigg(\frac{1}{r}\mathbf{I}+\mathbf{P}^T\mathbf{H}^T\big(\varsigma^2\sigma^2\text{diag}(\mathbf{H}\mathbf{b})+\sigma^2\mathbf{I}\big)^{-1}\mathbf{H}\mathbf{P}\bigg)^{-1}} ,
\end{equation}
\begin{equation}
f(\alpha)=g\big(\alpha\mathbf{P}_1+(1-\alpha)\mathbf{P}_2\big),
\end{equation}
where $\mathbf{P}_1$ and $\mathbf{P}_2$ are $3N_t\times 3N_t$ real valued matrices and
\begin{equation}
\boldsymbol{\Sigma}=(\varsigma^2\sigma^2\text{diag}(\mathbf{H}\mathbf{b})+\sigma^2\mathbf{I})^{-1}. \end{equation}
To simplify notations, define
\begin{equation}
\mathbf{A}=\frac{1}{r}\mathbf{I}+(\alpha\mathbf{P}_1^T+(1-\alpha)\mathbf{P}_2^T)\mathbf{H}^T\boldsymbol{\Sigma}
\mathbf{H}(\alpha\mathbf{P}_1+(1-\alpha)\mathbf{P}_2),
\end{equation}
thus $f(\alpha)={\it tr}(\mathbf{A})$. The second order derivative of $f(\alpha)$ with respect to $\alpha$ is calculated by
\begin{equation}\label{Eq_add_1}
\frac{\partial^2f(\alpha)}{\partial^2\alpha}=2{\rm Tr}(\mathbf{A}^{-1}\frac{\partial\mathbf{A}}{\partial\alpha}\mathbf{A}^{-1}\frac{\partial\mathbf{A}}{\partial\alpha}\mathbf{A}^{-1})
-{\rm Tr}(\mathbf{A}^{-1}\frac{\partial^2\mathbf{A}}{\partial^2\alpha}\mathbf{A}^{-1}),
\end{equation}
where we find
\begin{equation} \label{Eq_add}
\begin{split}
\frac{\partial\mathbf{A}}{\partial\alpha} &= 2 \alpha \mathbf{P}^T_3 \mathbf{H}^T\boldsymbol{\Sigma}\mathbf{H} \mathbf{P}_3 + \mathbf{P}^T_3 \mathbf{H}^T\boldsymbol{\Sigma}\mathbf{H} \mathbf{P}_2 + \mathbf{P}^T_2 \mathbf{H}^T\boldsymbol{\Sigma}\mathbf{H} \mathbf{P}_3, \\
\frac{\partial^2\mathbf{A}}{\partial^2\alpha} &= 2 \mathbf{P}^T_3 \mathbf{H}^T\boldsymbol{\Sigma}\mathbf{H} \mathbf{P}_3,
\end{split}
\end{equation}
where $\mathbf{P}_3= \mathbf{P}_1-\mathbf{P}_2$.
Substituting \eqref{Eq_add} into \eqref{Eq_add_1}, we see that $\frac{\partial^2f(\alpha)}{\partial^2\alpha}$ cannot be always positive. The sign of $\frac{\partial^2f(\alpha)}{\partial^2\alpha}$ actually depends on the value of $\alpha$, $\mathbf{H}$, $\mathbf{P}_1$, and $\mathbf{P}_2$. We thus conclude that $f(\alpha)$ is not convex in $\mathbf{P}$. We provide one example to verify that the second derivative can be negative: choose $\alpha=r=\varsigma^2=\sigma^2=1$, $\mathbf{P}_1=\mathbf{0}$, $\mathbf{P}_2=\mathbf{I}_{3N_t}$, $\mathbf{H}=\mathbf{I}_{3N_t}$ and $\mathbf{b}=\mathbf{1}_{3N_t}$. A similar procedure shows the non-convexity of the objective function w.r.t. $\mathbf{b}$.

\section{Proof of Proposition 2}
From \eqref{37}, the maximum possible sum MSE over the channel uncertainty region of downlink VLC can be transformed into
\begin{align}
\text{MSE}^{\text{DL}}_{max}&=
\max_{\mathbf{h}_k\in\mathcal{U}_k(\boldsymbol{\delta}_k)} r\sum_{k=1}^K ||g_k\mathbf{h}_k^T\mathbf{P}-\mathbf{e}^T||^2+||\sigma\mathbf{g}||^2\notag\\
&+\sum_{k=1}^K\varsigma^2\sigma^2g_k^2\mathbf{h}_k^T\mathbf{b}.
\end{align}
Introducing slack variables $\lambda_0$, $\lambda_k$, $\tilde{\lambda}_k$, and transform the optimization problem \eqref{51} into
\begin{equation}
\begin{aligned}
& \underset{\mathbf{P},\mathbf{b},\boldsymbol{\lambda},\boldsymbol{\tilde{\lambda}},\boldsymbol{\lambda}'}{\text{min}}
& & {\lambda_0+\sum_{k=1}^K}(\lambda_k+\tilde{\lambda}_k)\\
& \text{s.t.}
&& ||\sigma\mathbf{g}||^2\leq \lambda_0,\\
&&&
r||g_k\mathbf{h}_k^T\mathbf{P}-\mathbf{e}_k^T||^2\leq\lambda_k,~\mathbf{h}_k\in\mathcal{U}_k(\boldsymbol{\delta}_k),\\
&&&
\varsigma^2\sigma^2g_k^2\mathbf{h}_k^T\mathbf{b}\leq \tilde{\lambda}_k,\qquad\mathbf{h}_k\in\mathcal{U}_k(\boldsymbol{\delta}_k),\\
&&& \text{abs}(\mathbf{P})\boldsymbol{\delta}-\mathbf{b}\leq \mathbf{0},\\
&&& \mathbf{1}^T\mathbf{b}= \beta P_T.\label{66}
\end{aligned}
\end{equation}
For the second constraint, we apply the Schur complement lemma to transform into the following form
\begin{align}
&\begin{bmatrix}
\lambda_k & g_k\hat{\boldsymbol{h}}_k^T\mathbf{P}-\mathbf{e}_k\\
(g_k\hat{\boldsymbol{h}}_k^T\mathbf{P}-\mathbf{e}_k)^T & r^{-1}\mathbf{I}
\end{bmatrix}\notag\\
\succeq
&\begin{bmatrix}
0 & \boldsymbol{\delta}_k^Tg_k\mathbf{P}\\
g_k\mathbf{P}^T\boldsymbol{\delta}_k & \mathbf{O}
\end{bmatrix}, ~~||\boldsymbol{\delta}_k||\leq \rho_k\label{67}
\end{align}
where $\mathbf{O}$ is a $K\times K$ all-zero matrix. As there are an infinite number of vector $\boldsymbol{\delta}_k$'s that satisfy $||\boldsymbol{\delta}_k||\leq \rho_k$, \eqref{67} still contains an infinite number of equations. To transform it into a single constraint, we apply the following lemma
\begin{mylemma}
Given matrices $\mathbf{B}$, $\mathbf{C}$, $\mathbf{A}$ with $\mathbf{A}=\mathbf{A}^H$, the semi-infinite linear matrix equation of the form of
\begin{equation}
\mathbf{A}\geq \mathbf{B}^H\mathbf{X}\mathbf{C}+\mathbf{C}^H\mathbf{X}^H\mathbf{B},~~\forall \mathbf{X}:||\mathbf{X}||_F\leq \rho,
\end{equation}
holds if and only if there exists $\nu\geq 0$ such that
\begin{equation}
\begin{bmatrix}
\mathbf{A}-\nu\mathbf{C}^H\mathbf{C} & -\rho\mathbf{B}^H\\
-\rho\mathbf{B} & \nu\mathbf{I}
\end{bmatrix}\succeq 0
\end{equation}
\end{mylemma}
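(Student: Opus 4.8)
The plan is to establish the equivalence in two directions after first collapsing the semi-infinite matrix inequality into a single scalar condition. \textbf{Reduction.} First I would use that a Hermitian matrix $\mathbf{M}\succeq 0$ iff $\mathbf{z}^H\mathbf{M}\mathbf{z}\ge 0$ for every $\mathbf{z}$, so the left-hand condition is equivalent to $\mathbf{z}^H\mathbf{A}\mathbf{z}\ge 2\,\mathrm{Re}(\mathbf{z}^H\mathbf{B}^H\mathbf{X}\mathbf{C}\mathbf{z})$ for all $\mathbf{z}$ and all $\mathbf{X}$ with $\|\mathbf{X}\|_F\le\rho$. Writing $\mathbf{u}=\mathbf{B}\mathbf{z}$ and $\mathbf{w}=\mathbf{C}\mathbf{z}$ and using $\mathbf{u}^H\mathbf{X}\mathbf{w}=\langle \mathbf{u}\mathbf{w}^H,\mathbf{X}\rangle_F$, Cauchy--Schwarz in the Frobenius inner product gives $\max_{\|\mathbf{X}\|_F\le\rho}2\,\mathrm{Re}(\mathbf{u}^H\mathbf{X}\mathbf{w})=2\rho\|\mathbf{u}\|\,\|\mathbf{w}\|$, attained at the rank-one $\mathbf{X}=\rho\,\mathbf{u}\mathbf{w}^H/(\|\mathbf{u}\|\,\|\mathbf{w}\|)$. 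Hence the whole left-hand condition reduces to the scalar statement $\mathbf{z}^H\mathbf{A}\mathbf{z}\ge 2\rho\|\mathbf{B}\mathbf{z}\|\,\|\mathbf{C}\mathbf{z}\|$ for all $\mathbf{z}$, which is the object I would work with henceforth.

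\textbf{Sufficiency.} For the ``if'' direction, assume the displayed LMI holds for some $\nu\ge 0$. If $\nu>0$, the Schur complement lemma invoked just above shows the LMI is equivalent to $\mathbf{A}-\nu\mathbf{C}^H\mathbf{C}-\tfrac{\rho^2}{\nu}\mathbf{B}^H\mathbf{B}\succeq 0$, so for every $\mathbf{z}$ one has $\mathbf{z}^H\mathbf{A}\mathbf{z}\ge \nu\|\mathbf{C}\mathbf{z}\|^2+\tfrac{\rho^2}{\nu}\|\mathbf{B}\mathbf{z}\|^2\ge 2\rho\|\mathbf{B}\mathbf{z}\|\,\|\mathbf{C}\mathbf{z}\|$, the last step being the arithmetic--geometric mean inequality; this is precisely the scalar condition. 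The degenerate case $\nu=0$ forces the off-diagonal block $-\rho\mathbf{B}$ to vanish, so $\mathbf{B}=\mathbf{0}$ (or $\rho=0$) and the claim is trivial. This direction I expect to be routine.

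\textbf{Necessity and the main obstacle.} The hard direction is to manufacture a \emph{single} multiplier $\nu\ge 0$ from the scalar condition. Since $\min_{\nu>0}\big(\nu\|\mathbf{C}\mathbf{z}\|^2+\tfrac{\rho^2}{\nu}\|\mathbf{B}\mathbf{z}\|^2\big)=2\rho\|\mathbf{B}\mathbf{z}\|\,\|\mathbf{C}\mathbf{z}\|$, the hypothesis only gives, for each $\mathbf{z}$, a \emph{$\mathbf{z}$-dependent} $\nu$ making $\mathbf{z}^H\big(\mathbf{A}-\nu\mathbf{C}^H\mathbf{C}-\tfrac{\rho^2}{\nu}\mathbf{B}^H\mathbf{B}\big)\mathbf{z}\ge 0$, whereas I need one $\nu$ valid for all $\mathbf{z}$ simultaneously; writing $\phi(\nu)=\lambda_{\min}\big(\mathbf{A}-\nu\mathbf{C}^H\mathbf{C}-\tfrac{\rho^2}{\nu}\mathbf{B}^H\mathbf{B}\big)$, the goal is $\phi(\nu^*)\ge 0$ for some $\nu^*>0$. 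I would argue by seeking a balancing point: taking a unit eigenvector $\mathbf{z}_\nu$ attaining $\phi(\nu)$, the AM--GM step is tight exactly when $\nu=\rho\|\mathbf{B}\mathbf{z}_\nu\|/\|\mathbf{C}\mathbf{z}_\nu\|$, and at such a $\nu^*$ the scalar condition forces $\phi(\nu^*)\ge 0$; existence of $\nu^*$ would follow from a continuity/intermediate-value argument using the boundary behaviour of $\phi$ (as $\nu\to 0^+$ the term $-\tfrac{\rho^2}{\nu}\mathbf{B}^H\mathbf{B}$ dominates unless $\mathbf{B}\mathbf{z}=\mathbf{0}$, and as $\nu\to\infty$ the term $-\nu\mathbf{C}^H\mathbf{C}$ dominates unless $\mathbf{C}\mathbf{z}=\mathbf{0}$). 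This step is the crux, because it is really the assertion that the S-procedure is lossless for this pair of quadratic forms, and the delicate point is the possibly non-unique, non-smooth selection of the minimizing eigenvector $\mathbf{z}_\nu$ when eigenvalues cross, which complicates the fixed-point argument. A cleaner route that sidesteps the eigenvector bookkeeping is to invoke the convexity of the joint numerical range of two Hermitian forms (Dines' theorem), or to set up the worst-case SDP and extract $\nu$ as a dual variable via strong duality under a Slater condition; either supplies the uniform multiplier and closes the proof.
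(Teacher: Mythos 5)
A point of reference first: the paper never proves this lemma --- it is quoted in Appendix B as a known result (it is essentially Petersen's lemma, a standard tool in the robust-LMI/beamforming literature), so there is no in-paper argument to compare against and your attempt must stand on its own. Your reduction of the semi-infinite inequality to the scalar condition $\mathbf{z}^H\mathbf{A}\mathbf{z}\ge 2\rho\|\mathbf{B}\mathbf{z}\|\,\|\mathbf{C}\mathbf{z}\|$ for all $\mathbf{z}$ is correct (the maximizing $\mathbf{X}$ is rank one, which is also why the Frobenius-norm and spectral-norm versions of the lemma coincide), and your sufficiency direction, including the degenerate case $\nu=0$ forcing $\rho\mathbf{B}=\mathbf{0}$, is complete.

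The necessity direction, however, has a genuine gap, and it is not only the eigenvector bookkeeping you flag. The fallback you lean on --- ``S-procedure losslessness for this pair of quadratic forms,'' Dines' theorem, or SDP duality --- cannot be applied to the scalar condition in the form you left it, because $\mathbf{z}^H\mathbf{A}\mathbf{z}-2\rho\|\mathbf{B}\mathbf{z}\|\,\|\mathbf{C}\mathbf{z}\|\ge 0$ is \emph{not} an implication between two quadratic forms in $\mathbf{z}$: the product of norms is not quadratic, so in the variable $\mathbf{z}$ alone there is no pair of forms for the S-lemma to act on. The missing idea is a lifting step: since $2\rho\|\mathbf{B}\mathbf{z}\|\,\|\mathbf{C}\mathbf{z}\|=\max_{\|\mathbf{w}\|\le\|\mathbf{C}\mathbf{z}\|}2\rho\,\mathrm{Re}(\mathbf{w}^H\mathbf{B}\mathbf{z})$, the scalar condition is equivalent to the implication
\begin{equation*}
\|\mathbf{C}\mathbf{z}\|^2-\|\mathbf{w}\|^2\ge 0 \;\Longrightarrow\; \mathbf{z}^H\mathbf{A}\mathbf{z}-2\rho\,\mathrm{Re}(\mathbf{w}^H\mathbf{B}\mathbf{z})\ge 0
\end{equation*}
in the joint variable $(\mathbf{z},\mathbf{w})$, where both sides now are genuine Hermitian quadratic forms. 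The lossless one-constraint S-lemma (valid under the Slater condition, here $\mathbf{C}\neq\mathbf{0}$) then produces a single $\nu\ge 0$ such that $\mathbf{z}^H\mathbf{A}\mathbf{z}-2\rho\,\mathrm{Re}(\mathbf{w}^H\mathbf{B}\mathbf{z})-\nu\big(\|\mathbf{C}\mathbf{z}\|^2-\|\mathbf{w}\|^2\big)\ge 0$ for all $(\mathbf{z},\mathbf{w})$, which is exactly the statement that the block matrix in the lemma is positive semidefinite; no eigenvector selection or intermediate-value argument is needed. Note finally that the Slater caveat is not cosmetic: for $\mathbf{C}=\mathbf{0}$, $\mathbf{A}=\mathbf{0}$, $\mathbf{B}\neq\mathbf{0}$, $\rho>0$ the semi-infinite condition holds trivially while no $\nu\ge 0$ makes the LMI feasible, so the equivalence as stated (by the paper) is actually false in that degenerate case; any complete proof must exclude it explicitly.
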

By choosing
\begin{equation}
\mathbf{A}=
\begin{bmatrix}
\lambda_k & g_k\hat{\mathbf{h}}_k^T\mathbf{P}-\mathbf{e}_k^T\\
(g_k\hat{\mathbf{h}}_k^T\mathbf{P}-\mathbf{e}_k^T)^T & r^{-1}\mathbf{I}
\end{bmatrix},
\end{equation}
\begin{equation}
\mathbf{B}=[\mathbf{0}~~g_k\mathbf{P}],~~\mathbf{C}=[-1~~ \mathbf{0}],~~\mathbf{X}=\boldsymbol{\delta}_k,
\end{equation}
thus, equation \eqref{67} is transformed into
\begin{equation}
\begin{bmatrix}
    \lambda_k-\nu_k  & g_k\hat{\mathbf{h}}_k^T\mathbf{P}-\mathbf{e}_k^T &  \mathbf{0}^T \\
   (g_k\hat{\mathbf{h}}_k^T\mathbf{P}-\mathbf{e}_k^T)^T &   r^{-1}\mathbf{I} &  -\rho_kg_k\mathbf{P}^T \\
   \mathbf{0}  & -\rho_kg_k\mathbf{P} &  \nu_k\mathbf{I}
\end{bmatrix}\succeq 0,~\forall k\\
\end{equation}
\begin{equation}
\nu_k \geq 0,~\forall k,
\end{equation}
where $\nu_k$'s are auxiliary scalar variables. Similarly, we can prove that the third constraint of \eqref{66}
\begin{equation}
\varsigma^2\sigma^2g_k^2\mathbf{h}_k^T\mathbf{b}\leq \tilde{\lambda}_k,~\mathbf{h}_k\in\mathcal{U}_k(\boldsymbol{\delta}_k)
\end{equation}
is equivalent to
\begin{equation}
\begin{bmatrix}
    \tilde{\lambda}_k-\varsigma^2\sigma^2g_k^2\hat{\mathbf{h}}^T\mathbf{b}-\tilde{\nu}_k(\frac{1}{2}\varsigma^2\sigma^2g_k^2)^2  & -\rho_k\mathbf{b}^T\\
   -\rho_k\mathbf{b} &  \tilde{\nu}_k\mathbf{I}
\end{bmatrix}\succeq 0,~\forall k\\
\end{equation}
\begin{equation}
\tilde{\nu}_k\geq 0,~~\forall k,
\end{equation}
where $\tilde{\nu}_k$'s are auxiliary scalar variables.

\end{document}